\newtheorem{theorem}{Theorem}
\newtheorem{lemma}{Lemma}
\newtheorem{definition}{Definition}
\begin{document}
\title{A Covert Queueing Channel in FCFS Schedulers}

\author
{AmirEmad Ghassami, \textit{Student Member, IEEE}, and Negar Kiyavash, \textit{Senior Member, IEEE}\thanks{AmirEmad Ghassami is with the Department of Electrical and Computer Engineering and Coordinated Science Lab, University of Illinois at Urbana-Champaign, Urbana, IL, email: \texttt{ghassam2@illinois.edu}. Negar Kiyavash is with the Department of Electrical and Computer Engineering, Department of Industrial and Enterprise Systems Engineering and Coordinated Science Lab, University of Illinois at Urbana-Champaign, Urbana, IL, email: \texttt{kiyavash@illinois.edu}.
The results in this paper were presented in part at the 2015 IEEE International Symposium on Information Theory, Hong Kong, June 14 - June 19, 2015 \cite{ghassami2015capacity}.}}

\maketitle

\begin{abstract}

We study covert queueing channels (CQCs), which are a kind of covert timing channel that may be exploited in shared queues across supposedly isolated users.
%In our system model, a user modulates messages to another user via his pattern of access to the shared resource scheduled in a first-come-first-served (FCFS) manner.
In our system model, a user sends messages to another user via his pattern of access to the shared resource, which serves the users according to a first come first served (FCFS) policy.
One example of such a channel is the cross-virtual network covert channel in data center networks, resulting from the queueing effects of the shared resource.
%We use information-theoretic measures to model information transmission rate in these systems.
%We have first considered the case that only two users, the transmitter and the receiver of the message, share a deterministic and work conserving FCFS scheduler, and the capacity of this channel is calculated.
First, we study a system comprising a transmitter and a receiver that share a deterministic and work-conserving FCFS scheduler, and we compute the capacity of this channel.
%Then we have extended this case in two ways and have studied (1) The effect of the presence of a third user on the information transmission rate, and (2) The case that the shared scheduler is non-deterministic.
We also consider the effect of the presence of other users on the information transmission rate of this channel.
%Next, we extend the model to include a third user who also uses the shared resource and study the effect of the presence of this user on the information transmission rate.
%%Imitating the approach presented in these two cases, one can calculate the capacity of any queueing covert channel with non-deterministic scheduler serving any number of users.
%The solution approach presented in this extension may be applied to calculate the capacity of the covert queueing channel among any number of users.
The achievable information transmission rates obtained in this study demonstrate the possibility of significant information leakage and great privacy threats brought by CQCs in FCFS schedulers.

%We study covert queueing channels (CQCs), which are a kind of covert timing channel that may be exploited in shared queues across supposedly isolated users. In our system model, a user sends messages to another user via his pattern of access to the shared resource, which serves the users according to a first come first served (FCFS) policy. One example of such a channel is the cross-virtual network covert channel in data center networks, resulting from the queueing effects of the shared resource. First, we study a system comprising a transmitter and a receiver that share a deterministic and work-conserving FCFS scheduler, and we compute the capacity of this channel. We also consider the effect of the presence of other users on the information transmission rate of this channel. The achievable information transmission rates obtained in this study demonstrate the possibility of significant information leakage and great privacy threats brought by CQCs in FCFS schedulers.

\end{abstract}

\begin{IEEEkeywords}
Covert Queueing Channel, First-Come-First-Served Scheduler, Capacity Limit.
\end{IEEEkeywords}

%\pagebreak
%\vspace{-2mm}
\section{Introduction}
\label{introduction}

The existence of side and covert channels due to the fragility of isolation mechanisms is an important privacy and security threat in computer networks. Such channels may be created across users, which were supposed to be isolated, resulting in information leakage. By definition, a covert channel is a hidden communication channel, which is not intended to exist in the system and is created furtively by users\cite{gligor1993covert}.
Covert channels may be exploited by a trusted user, or possibly a malware inside a system with access to secret information to leak it to a distrusted user.
On the other hand, in a side channel a malicious user attempts to learn private information by observing information not intended for him.
In this scenario, there is no collaboration between the source of information and the recipient \cite{kadloor2010low}.

Given that a lot of sensitive organizations such as CIA, and US Navy and Air Force, are abandoning in-house infrastructure and migrating to clouds, privacy has emerged as a serious risk for clouds. In the context of covert channels, a disgruntled employee, a leaker, or a malware, can easily sneak out extremely confidential and private data without explicitly communicating with an external party (data exfiltration). The employee could pretend to be talking to another trusted entity but sending out covert signals via shared queues to someone implicitly. This would be a severe threat to privacy leakage as it would be unbeknown to the tenant being targeted that their data is being exfiltrated and would bypass most defenses deployed at the host or network layer. Typical scenarios here could be stealing cryptographic keys, bank records, medical records, service records of military personal, names and locations of secret offices and networks of CIA/NSA, etc., i.e., small pieces of information that can have detrimental consequences if leaked.

Timing channels are one of the main types of covert/side channels, in which information is conveyed through timing of occurrence of events (e.g., inter-arrival times of packets).
A special case of timing channels is covert/side queueing channels, which can arise between users who share a packet scheduler in a network.
%A special case of covert and side channels is a timing channel in which information is conveyed through timing of occurrence of events (e.g., inter-arrival times of packets).
%%The existing work on timing channels has mainly concentrated on timing channels in which only two users communicate by modulating the timings, and the receiver sees a noisy version of the transmitter's inputs \cite{0, 1, 2, 3, 4, 5}.
%For instance, queueing covert/side timing channels may arise between users who share a packet scheduler in a network.
Packet schedulers serve packets from multiple streams, which are queued in a single queue. This causes dependencies between delays observed by users. Particularly, the delay that one user experiences depends on the amount of traffic generated by other streams, as well as his own traffic. Hence, a user can gain information about other users' traffic by observing delays of his own stream. This dependency between the streams can breach private information as well as create hidden communication channels between the users.

%%%%%%%%%%%%
One example of a covert/side queueing channel is the cross-virtual network covert channel in data center networks and cloud environments.
As mentioned earlier, in recent years, migrating to commercial clouds and data centers is becoming increasingly popular among companies that deal with data.
The multi-tenant nature of cloud and sharing infrastructure between several users has made data protection and avoiding information leakage a serious challenge in such environments \cite{wsj2011}.
In data center networks, software-defined-networks are frequently used for load balancing \cite{mckeown2008openflow}. This generates logically isolated virtual networks and prevents direct data exchange. However, since packet flows belonging to different VNs inevitably share underlying network infrastructure (such as a router or a physical link), it is possible to transfer data across VNs through timing channels resulting from the queueing effects of the shared resource(s).

For data centers, the underlying premise is that resources should be multiplexed and oversubscribed as much as possible to maximize utilization and consequently profits. Hence, most of the resources are shared (e.g., cache, system bus, memory, NICs, switches, links, etc.). Under such a scenario, it can be seen that there will be numerous opportunities for establishing covert channels. Hence, a queue-sharing based channel is certainly not the only option. However, it is the most convenient/practical and perhaps the only one that continues to exist even if tenants resort to dedicated infrastructure (fully dedicated servers are provided by cloud providers at a higher cost, but network is always shared and no option is provided for dedicated networking infrastructure).
Two specific scenarios to argue this are as follows.
In the first case, there are two mutually untrusting tenants that are placed on separate servers. However, the first hop router and the outbound links from then onwards are shared. In such a scenarios, there are numerous shared queues which can be exploited to setup a covert channel.
Since the two tenants are deployed on separate servers, none of the other media can be exploited to create covert channels, and the only option is to use a network-based queue-sharing channel.
 In the second scenario, consider a load balancing server. Two separate tenants, each own a VM on this load balancing server (such servers have typically very high-end NICs and cloud vendors try to maximize sharing on these expensive NICs). The VMs on this server simply load balance traffic between other VMs of the tenant. Again, there are numerous network and server-based queues that can be exploited to setup a covert channel. %of the type mentioned in your paper.

%In a queuing side channel, a malicious user, called attacker, attempts to learn another user's private information. The main approach used by the attacker is traffic analysis, i.e., the attacker tries to infer private information from the victim's traffic pattern. The attacker can have an estimation of the features of the other user's stream such as packet size and timing by emitting frequent packets in his own sequence. Previous work shows that through traffic analysis, the attacker can obtain various private information including visited web sites \cite{web}, sent keystrokes \cite{key}, and even inferring spoken phrases in a voice-over-IP connection \cite{voip}.

%In \cite{rtt}, Gong et al. proposed an attack  where a remote attacker learns about a legitimate user's browser activity by sampling the queue sizes in the downstream buffer of the user's DSL link.
%The information leakage of a queueing side channel in an FCFS scheduler is analyzed in \cite{isit11}.
%The analysis for more general work conserving policies has been done in \cite{c66} and \cite{leak}.
%The authors in \cite{leak} present an analytical framework for modeling information leakage in queuing side channels  and quantify the leakage for several common scheduling policies.

In this paper, we study covert queueing channels (CQCs) in a shared deterministic and work-conserving first-come-first-served (FCFS) scheduler. We present an information-theoretic framework to describe and model the data transmission in this channel
%We will show that a supposedly harmless queueing cover channel can cause data transmission on a surprisingly high rate.
and calculate its capacity.
First, we consider a two users setting depicted in Figure \ref{fig:2usersysmod}. In this model we have an encoder and a decoder user. Each user possesses a transmitter node and a receiver node. There is no direct communication channel between the users, but they share a packet scheduler. Hence, the delays observed by users are correlated. Therefore, the encoder user can encode a message in his traffic pattern and the decoder user can estimate the message by estimating the encoder's traffic pattern via the delays he experiences. Next, we extend the model to study
%First, we consider the two users setting and then study
%We start with a deterministic and work conserving FCFS scheduler serving packets from two users. The capacity is calculated in this case using an achievability and converse type of argument. Next, we consider the case that the scheduler is serving packets from a third user as well and
the effect of the presence of a third user on the information transmission rate.
The approach for analyzing the effect of the presence of the third user can be extended to calculate the capacity of the covert queueing channel serving any number of users.

The rest of the paper is organized as follows.
We review related works in Section \ref{relworks}.
In Section \ref{sysdesc}, we describe the system model.
The capacities of the introduced channel for the two and three user cases are calculated in Sections \ref{2user} and \ref{3user}, respectively.
%In Section \ref{nondet}, we consider the case that the scheduler is not deterministic, where, the service process is assumed to be a Bernoulli process with parameter $\rho$, where $0<\rho<1$.
Our concluding remarks are presented in Section \ref{conclusion}. 
\vspace{-3mm}
\section{Related Works}
\label{relworks}

The existing literature on covert/side timing channels has mainly concentrated on timing channels in which the receiver/adversary has direct access to the timing sequence produced by the transmitter/victim or a noisy version of it. However, in a covert/side queueing channel, the receiver/adversary does the inference based on the timing of his own packets which has been influenced by the original stream.

In a queuing side channel, where a malicious user, called an attacker, attempts to learn another user's private information, the main approach used by the attacker is traffic analysis. That is, the attacker tries to infer private information from the victim's traffic pattern. The attacker can have an estimation of the features of the other user's stream, such as packet size and timing by emitting frequent packets in his own sequence. Previous work shows that through traffic analysis, the attacker can obtain various private information including 
exact schedules of real-time systems \cite{chen2015schedule, certs2016shceduleak},
visited web sites \cite{liberatore2006inferring}, sent keystrokes \cite{song2001timing}, and even inferring spoken phrases in a voice-over-IP connection \cite{wright2010uncovering}.

In \cite{gong2012website}, Gong et al. proposed an attack  where a remote attacker learns about a legitimate user's browser activity by sampling the queue sizes in the downstream buffer of the user's DSL link.
The information leakage of a queueing side channel in an FCFS scheduler is analyzed in \cite{gong2011information}.
The analysis of more general work-conserving policies has been done in \cite{kadloor2012mitigating} and \cite{gong2014quantifying}.
The authors in \cite{gong2014quantifying} presented an analytical framework for modeling information leakage in queuing side channels  and quantify the leakage for several common scheduling policies.

Most of the work in covert timing channels is devoted to the case in which two users communicate by modulating the timings, and the receiver sees a noisy version of the transmitter's inputs \cite{anantharam1996bits, cabuk2004ip, murdoch2005embedding, llamas2005evaluation, kang1996network, gorantla2012characterizing, soltani2015covert, soltani2016covert, mukherjee2016covert}.
%For instance, Cabuk et al. \cite{1} propose  a design of a covert IP timing channel in which data is divided into small fixed-size frames....\\
Also, there are many works devoted to the detection of such channels \cite{cabuk2004ip, berk2005detection, gianvecchio2007detecting}.
The setup of CQC is new in the field of covert channels and as far as the authors are aware, there are very few works on this setup \cite{ghassami2015capacity,tahir2015sneak, ghassami2017covert}. 
In \cite{ghassami2015capacity} we studied a system comprised of only a transmitter and a receiver and computed the capacity of the covert channel. We have extended the results of \cite{ghassami2015capacity} by studying the effect of the presence of other users on the information transmission rate of the CQC (Section \ref{3user}). Furthermore, the treatment of the proofs in the entire paper is now more rigorous and detailed. %Specifically, we have provided a more in depth study of the function $\tilde{H}$ and its properties which 
Specifically, we have provided a more in depth study of the function $\tilde{H}$ and its properties. This function indicates the highest amount of information transmittable for a given packet rate of the encoder user and a given inter-arrival time of packets of the decoder user, and
plays a fundamental role in the calculation of the capacity.

%\vspace{-0mm}
\section{System Description}
\label{sysdesc}

\begin{figure}[t]
\centering
\includegraphics[scale=0.395]{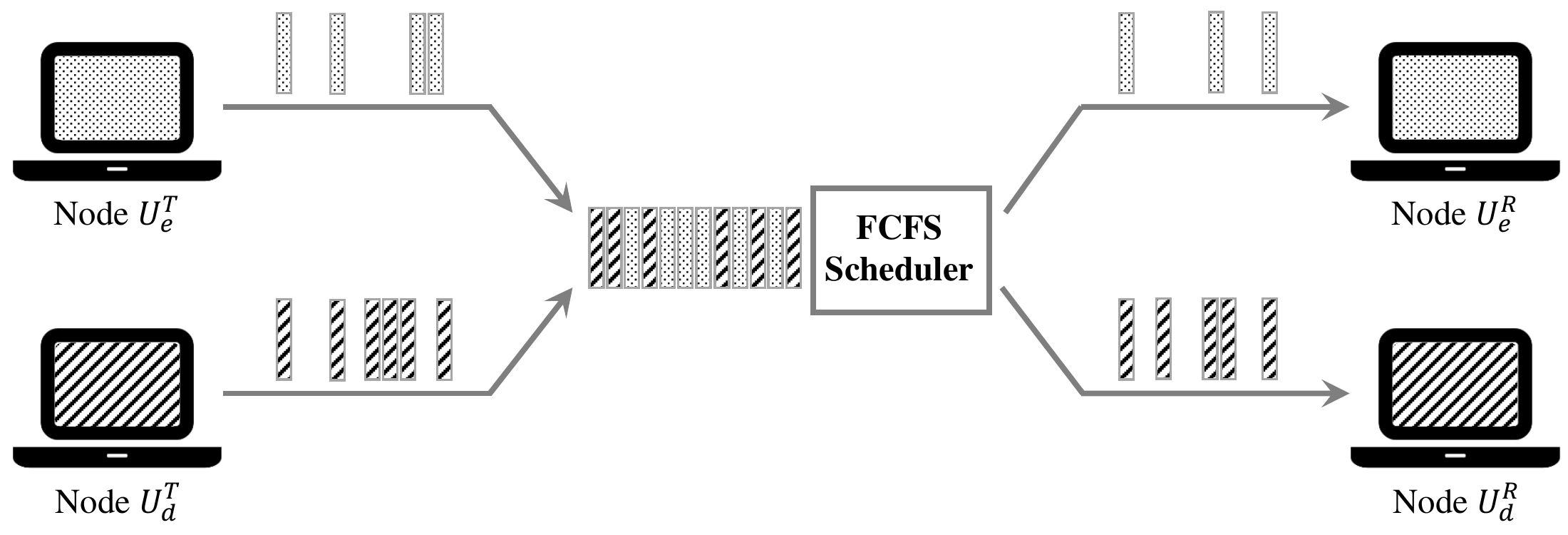}
\caption{Covert queueing channel in a system with 2 users.\vspace{-2mm}}
\label{fig:2usersysmod}
\end{figure}

Consider the architecture depicted in Figure \ref{fig:2usersysmod}. In this model, a scheduler serves packets from 2 users, $U_e$ and $U_d$.
Each user $U_i$, $i\in\{e,d\}$, is modeled by a transmitter and a receiver node, denoted by $U_i^T$ and $U_i^R$, respectively.
$U_i^R$ is the node which receives $U_i^T$'s packet stream. Note that $U_i^T$ and $U_i^R$ could correspond to the uplink and downlink of the same entity.
$U_e$ intends to send a message to $U_d$, but there is no direct channel between them. However, since $U_e^T$ and $U_d^T$'s packets share the same queue, $U_e^T$ can encode messages in the arrival times of its packets, which are passed onto  $U_d$ via queueing delays. Therefore, a timing channel is created between users via the delays experienced through the coupling of their traffic due to the shared scheduler.

To receive the messages from $U_e$, user $U_d$ sends a packet stream from the node $U_d^T$. He then uses the delays he experiences by receiving the packet stream at $U_d^R$ to decode the message.
Therefore, effectively, the nodes $U_e^T$ and $U_e^R$ are on the encoder side and the nodes $U_d^T$ and $U_d^R$ are on the decoder side of the channel of our interest.
Throughout the paper, we call $U_d$'s sent stream the \textit{probe stream}.

We consider an FCFS scheduler, which is commonly used in DSL routers. We assume this scheduler is deterministic and work-conserving. Time is discretized into slots, and the scheduler is capable of processing at most one packet per time slot. At each time slot, each user either issues one packet or remains idle. Furthermore, we assume that all packets are the same size.
Throughout the paper, we assume that the priorities of the users are known. Particularly, without loss of generality, we assume that $U_d$ has the highest priority among all users; i.e., in the case of simultaneous arrivals, $U_d$'s packet will be served first.

Figure \ref{fig:2userex} shows an example of the input and output streams of the system depicted in Figure \ref{fig:2usersysmod} with an FCFS scheduler. In this figure, the first stream is the arrival stream i.e., arrivals from both $U_e^T$ and $U_d^T$, depicted by dotted and diagonal patterns, respectively. The second stream is the output stream of user $U_e$ (received by $U_e^R$), and the third one is the output stream of user $U_d$ (received by $U_d^R$). In this example, we assume that one packet is buffered in the queue at time $A_i$, where a packet arrives from both $U_d^T$ and $U_e^T$.
If user $U_e$ had not sent the two packets (depicted by dotted pattern), the second packet of user $U_d$, which arrives at time $A_{i+1}$ could have departed one time slot earlier. Therefore, $U_d$ knows that $U_e$ has issued two packets.

\begin{figure}[t]
\centering
\includegraphics[scale=0.49]{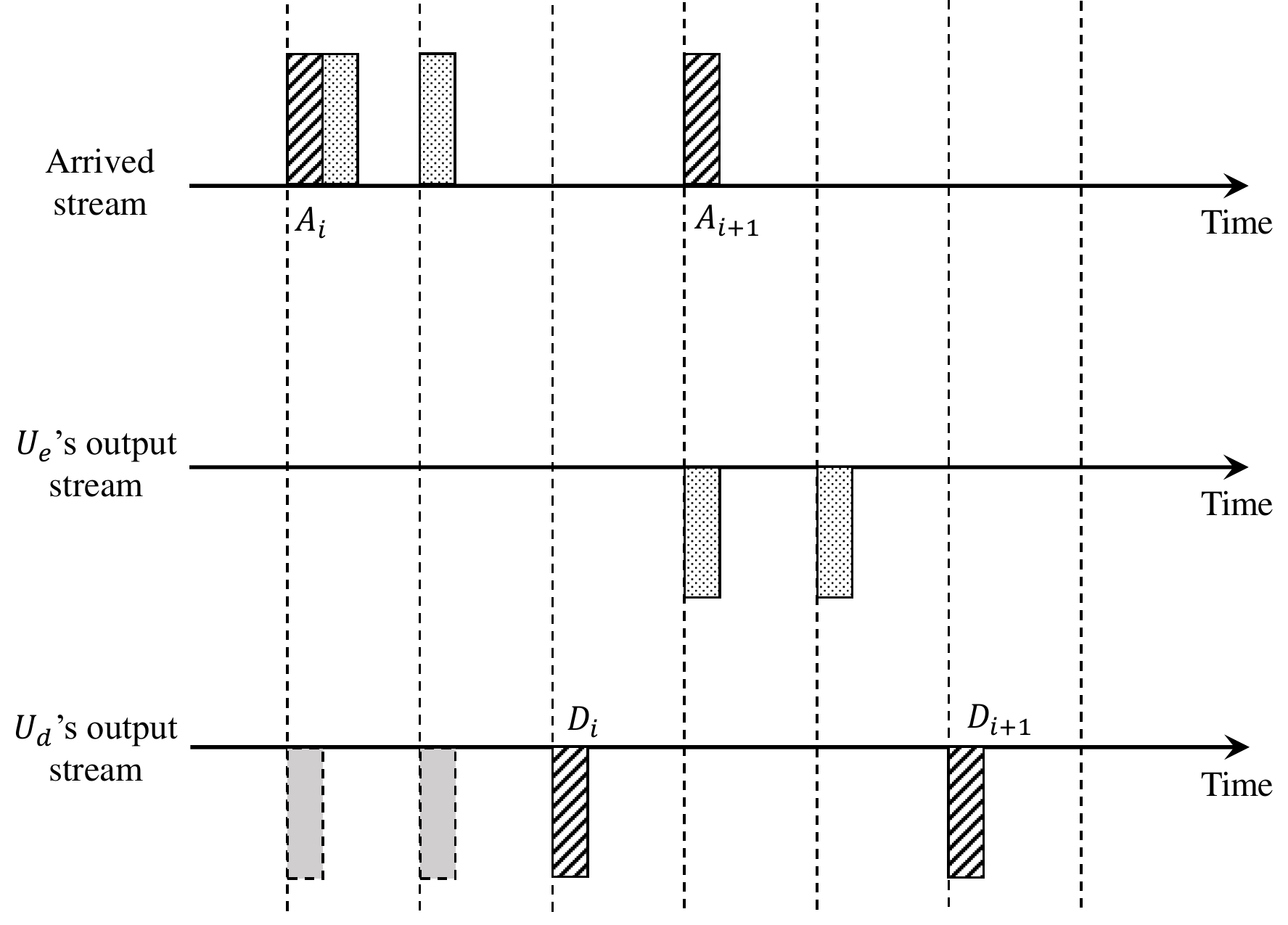}
\caption{An example of the input and output streams of the FCFS scheduler serving two users. Time slots are separated by dotted lines. Packets with dotted pattern belong to $U_e$ and packets with diagonal pattern belong to $U_d$. Gray packets are existing packets in the buffer, which could belong to either of the users. We assume that one packet is buffered in the queue at time $A_i$.\vspace{-0mm}}
\label{fig:2userex}
\end{figure}

%In Sections \ref{2user} and \ref{3user}, $\rho$ is assumed to be equal to $1$, which implies that the scheduler is deterministic and work conserving.

As mentioned earlier, at each time slot, each user is allowed to either send one packet or none; hence, the input and output packet sequences of each user could be viewed as  a binary bitstream, where `1' and `0' indicates whether a packet was sent or not in the corresponding time slot.

Assume message $W$ drawn uniformly from the message set $\{1,2,...,M\}$ is transmitted by $U_e^T$, and $\hat{W}$ is $U_d$'s estimate of the sent message. Our performance metric is the average error probability, defined as follows.
\begin{equation*}
P_e\triangleq P(W\neq\hat{W})=\displaystyle\sum_{m=1}^M{\frac{1}{M}P(\hat{W}\neq m|W=m)}.
\end{equation*}
$U_e$ encodes each message into a binary sequence of length $n$, $\Delta^n$, to create the codebook, which is known at the decoder, $U_d$.

In order to send a message, $U_e^T$ emits a packet in the $i$-th time slot if $\Delta_i=1$ and remains idle otherwise, i.e.,
\begin{equation*}
\Delta_i = \left\{
\begin{array}{l l}
1 & \quad \Rightarrow U_e^T \text{ issues a}\\ &\quad \text{  packet in time slot $i$.}\\
0 & \quad \Rightarrow U_e^T \text{ remains}\\ &\quad \text{ idle in time slot $i$.}
\end{array} \right.
\end{equation*}

To decode this message, $U_d^T$ sends a binary length $n$ stream (the probe stream) to the scheduler during the same length $n$ time period. User $U_d$ will use this stream and the response stream received at node $U_d^R$ to decode the sent message.\\

%\noindent
We define the code, rate of the code, and the channel capacity similar to the definitions in \cite{anantharam1996bits}, \cite{cover2012elements}, and \cite{csiszar2011information}, as follows.

\begin{definition}
An $(n, M, \epsilon )$-code consists of a codebook of $M$ equiprobable binary codewords,
where messages take on average $n$ time slots to be received,
%which can be transmitted over the channel on average in $n$ time slots.
and the error probability satisfies $P_e\leq\epsilon$.
\end{definition}

\begin{definition}
The information transmission rate, $R$, of a code is the amount of conveyed information (logarithm of the codebook size) normalized by the average number of used time slots for the message to be received, i.e.,
% \begin{equation*}
$
R=(\log{M})/n.
$
%\end{equation*}
\end{definition}
\noindent
Rate may be interpreted as the average amount of information conveyed in a single time slot.
\begin{definition}[Channel Capacity]
%\textbf{(Channel Capacity)} 
The Shannon capacity $C$, for a channel is the maximum achievable rate at which one can communicate through the channel when the average probability of error goes to zero. In other words, $C$ is the supremum of rates $R$, which satisfy the following property\cite{csiszar2011information}.
\begin{equation*}
\begin{aligned}
&\displaystyle\forall \delta>0,\hspace{2mm}\exists (n, M, \epsilon_n )\text{-code}\\
&\hspace{25mm}\text{s.t.}\hspace{5mm} \left\{
\begin{array}{l l}
\frac{\log{M}}{n}>R-\delta \\
\epsilon_n\rightarrow0
\end{array} \right.\text{ as }n\rightarrow\infty.
\end{aligned}
\label{eq:cap}
\end{equation*}
\end{definition}

The following notations will be used throughout the paper.
\begin{itemize}
\item
$r_i$: $U_i^T$'s packet rate.
\item
$A_i$: Arrival time of the $i$-th packet in the probe stream.
\item
$D_i$: Departure time of the $i$-th packet of the probe stream.\\
We assume $m$ packets are sent by $U_d$ during $n$ time slots and we have: $\displaystyle r_d = \lim_{n\rightarrow\infty}{m}/{n}$.
\item
$X_i$: Number of $U_e$'s packets sent in the interval $[A_i,A_{i+1})$. Note that $X_i=\sum_{j=A_i}^{A_{i+1}-1}\Delta_j$.
\item
$T_i = A_{i+1}-A_i$: Inter-arrival time between $i$-th and $(i+1)$-th packet of the probe stream. We denote a realization of $T$ by $\tau$.
\item
$Y_i = D_{i+1}-D_i-1$.
\item
$\hat{X}_i$: Estimation of $ X_i$ by decoder.
\item
$\hat{W}$: Decoded message.
\end{itemize}

In an FCFS scheduler, $U_d$ can have an estimation of the number of the packets of other users between any of his own consecutive packets. The estimation of the number of packets in the interval $[A_i, A_{i+1})$ is accurate if the scheduler is deterministic and work-conserving and a sufficient number of packets is buffered in the queue at time $A_i$\footnote{If the service rate of the scheduler is equal to $1$, there should be at least $A_{i+1}-A_i-1$ packets buffered in the queue at time $A_i$. Therefore, user $U_d$ needs to know the queue length. This is feasible using the following formula.
\begin{equation*}
q(A_i)=D_i-A_i-1
\end{equation*}
where $q(A_i)$ denotes the queue length at the time that the $i$-th packet in the probe stream arrives at the queue. The extra 1 in the formula is the time needed for the $i$-th packet of the probe stream to be served.
Therefore, user $U_d$ should always be aware of the queue length and keep it sufficiently large by sending extra packets when needed.
}. In that case, the number of other users' packets arriving in the interval $[A_i, A_{i+1})$ could be simply calculated by $Y_i=D_{i+1}-D_i-1$. Note that $U_d$ cannot pinpoint the location of the sent packets; that is, if the inter-arrival time is $\tau$, $U_d$ can distinguish between $\tau+1$ different sets of bit streams sent during this time.
Therefore, we look at any probe stream sent during $n$ time slots as a combination of different inter-arrival times.

If the sum of the packet rates of the users used during sending a message of length $n$ is on average larger than 1, then the message will be arrived on average during more than $n$ time slots. Also, this will destabilize the input queue of the scheduler. For example, for a system with two users $U_d$ and $U_e$,
if $U_d^T$ sends packets in every time slot, then sending a packet by $U_e^T$ in any time slot would cause a delay in the serving of the next packet of $U_d^T$ and hence could be detected. Therefore, in each time slot, $U_e^T$ could simply idle to signal a bit `0' or send a packet to signal a bit `1', resulting in the information rate of ${1}/{1.5}$ bit per time slot in the case that bits are equiprobable. But, this scheme is  not feasible in practice as it would destabilize the queue and result in severe packet drops.

In our model, we do not have any restrictions on the number of time slots in which the data transmission is happening. Therefore, the arrival sequence can be arbitrarily long, and hence, queue stability is required.
In order to have queue stability, it suffices that the total packet arrival rate does not exceed the service rate, which for a deterministic and work-conserving scheduler is equal to 1 (see Appendix \ref{app:A} for the proof of stability which is based on a Lyapunov stability argument for the general case that the serving rate is assumed to be $0\le\rho\le1$ and arbitrary number of users is considered).
Specifically, for the case of two users we need
\begin{equation}
r_e+r_d<1.
\label{eq:constrain}
\end{equation}

On the other hand, if the sum of the packet rates of the users used during sending a message of length $n$ is on average less than 1,
the length of the input queue may become less than the value required for accurate estimation of the number of the packets of other users between any of $U_d$'s consecutive packets.
Consequently, this creates an extra source of error for $U_d$ in estimating $X_i$'s.
In other words, for any given scheme with sum of the packet rates less than 1, increasing $r_d$ increases the resolution available for user $U_d$ by removing the aforementioned extra source of error. Hence, $U_d$ can have a better estimation of the number of other users' sent packets. 
To avoid this extra source of error, we focus on the coding schemes where the sum of the rates is 1. %with considering a preamble stage in our communication to guarantee sufficient queue length.
Therefore, in the case of two users, in order to achieve the highest information rate, the operation point should tend to the line $r_e+r_d=1$.

\section{Two-user Case}
\label{2user}

In this section, using achievability and converse arguments, the capacity of the introduced system is calculated for a system with a deterministic and work-conserving FCFS scheduler serving packets from two users.

As depicted in Figure \ref{fig:2usersysmod}, user $U_e$ is attempting to send a message to $U_d$ through the covert queueing channel between them. Note that since we have considered service rate of 1 for the FCFS scheduler and users can agree on the packet stream sent by $U_d^T$ ahead of time,
%we assume that $U_e$ has the side information $\tau^m$, which is the inter-arrival times of the probe stream.
the feedback $U_e^R$ is already available at the encoder.
Therefore, the following Markov chain holds
\begin{equation}
W\rightarrow X^m\rightarrow Y^m\rightarrow \hat{W}.
\label{eq:chain}
\end{equation}
Note that as mentioned earlier, if there is a sufficient number of packets buffered in the shared queue, $\hat{X}_i$ could be accurately estimated as $Y_i$.

%think about different inter arrival times and a probe stream as a combination of different inter arrival times.
%definition of n,m e code: equiprobable messages sent on average over n time slots
%if average of enc + dec is larger than 1-> then 1.message will be received in more than n time slots and the input queue will destabilize which is not acceptable for example
%if every time slot: rate .66 and instability
%if large: the queue will go to zero we won't be able to have queue and we cannot detect when inter arrival is larger than 1
%therefore we should make sure that the packet rate is always held at 1.
%in this case in a codeword of length n will be received in n time slits
%therefore we can assume

%in order to have queue stability, it suffices to have sum of rates less than 1. see app for proof

The main result of this section is the following theorem, the proof of which is developed in the rest of the section.
%2userTHEOREM
\begin{theorem}
\label{2userthm}
The capacity of the timing channel in a shared FCFS scheduler with service rate 1 depicted in Figure \ref{fig:2usersysmod} is equal to $0.8114$ bits per time slot, which can be obtained by solving the following optimization problem.
\begin{equation}
\begin{aligned}
\displaystyle &C=\max_{\alpha , \gamma_1 , \gamma_2} \alpha\tilde{H}(\gamma_1,1)+(1-\alpha)\tilde{H}(\gamma_2,\frac{1}{2})\\
&s.t.~~~
\alpha(\gamma_1+1)+(1-\alpha)(\gamma_2+\frac{1}{2})=1,
\end{aligned}
\label{eq:capacit}
\end{equation}
where $0\leq\alpha\leq1$ and $0\leq\gamma_1, \gamma_2\leq\frac{1}{2}$ and the function $\tilde{H}:[0,1]\times\{\frac{1}{k}:k\in \mathbb{N}\}\mapsto[0,1]$ is defined as
\begin{equation}
\displaystyle\tilde{H}(\gamma,\frac{1}{k})=\frac{1}{k}\max_{\substack{X\in\{0,1,...,k\}\\ \mathbb{E}[X]=k\gamma}} H(X),\hspace{5mm}k\in\mathbb{N}, 0\leq\gamma\leq1.
\label{eq:h}
\end{equation}
\end{theorem}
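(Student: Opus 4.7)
The plan is to establish matching upper and lower bounds on the rate in terms of the value of (\ref{eq:capacit}), and then to evaluate that optimum numerically to obtain $0.8114$.

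\textbf{Converse.} Starting from Fano's inequality along the Markov chain (\ref{eq:chain}), any reliable $(n,M,\epsilon_n)$-code satisfies $\log M \leq I(X^m;Y^m) + 1 + n\epsilon_n$; under the buffering assumption discussed before the theorem, $Y_i = X_i$ almost surely, so the channel $X^m\to Y^m$ is noiseless and the rate is bounded by $H(X^m)/n + o(1)$. I would then condition on the (publicly known) inter-arrival sequence $T^m$ of the probe stream and use the chain rule to obtain $H(X^m\mid T^m)\leq \sum_i H(X_i\mid T_i)$. Grouping indices by the inter-arrival length $\tau$, let $N_\tau$ be the number of blocks of that length and $\bar P_\tau$ the average of the per-block distributions, with mean $\tau\bar\gamma_\tau$. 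Concavity of entropy gives
\begin{equation*}
\sum_{i:\,T_i=\tau} H(X_i)\;\leq\; N_\tau\, H(\bar P_\tau)\;\leq\; N_\tau\,\tau\,\tilde{H}\bigl(\bar\gamma_\tau,\tfrac{1}{\tau}\bigr),
\end{equation*}
where the last inequality is by definition of $\tilde{H}$. Writing $\alpha_\tau = \tau N_\tau/n$ and invoking the paper's argument that capacity-achieving schemes must operate on the boundary $r_e+r_d=1$, this yields the countable-support bound
\begin{equation*}
R\;\leq\;\sum_{\tau\geq 1}\alpha_\tau\,\tilde{H}\bigl(\bar\gamma_\tau,\tfrac{1}{\tau}\bigr),\qquad \sum_{\tau\geq 1}\alpha_\tau\bigl(\bar\gamma_\tau+\tfrac{1}{\tau}\bigr)=1.
\end{equation*}

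\textbf{Reduction to $\tau\in\{1,2\}$ and achievability.} The step I expect to be the main obstacle is reducing this countable-support program to the two-term optimization (\ref{eq:capacit}). I view the constraint as selecting a single abscissa on the upper concave envelope of $\mathcal{S}=\bigcup_{\tau\in\mathbb{N}}\{(\gamma+1/\tau,\,\tilde{H}(\gamma,1/\tau)) : 0\leq\gamma\leq 1\}$; by Carath\'eodory two extreme points suffice, and the real content is showing that these can always be chosen on the $\tau=1$ and $\tau=2$ curves. I would use the structural properties of $\tilde{H}$ that the introduction advertises (concavity in $\gamma$ for each fixed $\tau$, the symmetry $\tilde{H}(\gamma,1/\tau)=\tilde{H}(1-\gamma,1/\tau)$ which justifies restricting $\gamma_1,\gamma_2\leq\tfrac{1}{2}$, and suitable monotonicity in $\tau$) to show that every curve with $\tau\geq 3$ lies weakly below the upper envelope generated by the $\tau\in\{1,2\}$ curves at the relevant abscissa. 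For achievability I would mirror the converse: given any optimum $(\alpha^*,\gamma_1^*,\gamma_2^*)$, construct a deterministic probe stream in which an $\alpha^*$ fraction of $U_d$'s inter-arrivals equal $1$ and the rest equal $2$ (interleaved so as to keep the queue sufficiently populated to maintain $Y_i=X_i$, per the footnote), draw each $X_i$ independently from the maximum-entropy distribution on $\{0,\dots,T_i\}$ with mean $T_i\,\gamma_{T_i}^*$, and apply a standard random-coding / joint-typicality argument on the noiseless map $X^m\to Y^m$ to achieve any rate strictly below the objective with vanishing error.

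\textbf{Numerical evaluation.} Finally, substituting the linear constraint into the objective reduces (\ref{eq:capacit}) to a smooth concave program in two free variables; writing down the KKT conditions (or running a short numerical search over the one-dimensional slice once $\gamma_2$ is expressed in terms of $\alpha$ and $\gamma_1$) yields the value $0.8114$ bits per time slot, completing the theorem.
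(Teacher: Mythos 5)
Your overall architecture — Fano plus the noiseless-channel observation for the converse, grouping by inter-arrival length $\tau$, Jensen/concavity at fixed $\tau$ to pass to $\tilde{H}(\bar\gamma_\tau,1/\tau)$, then a two-point reduction, then a random-coding achievability that mirrors the converse — is exactly the paper's route, and you correctly locate the crux: showing that the two Carath\'eodory points can be taken on the $\tau=1$ and $\tau=2$ curves.

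However, the structural property you cite for that crucial reduction is not the one that does the work, and as stated your argument has a gap. You list ``concavity in $\gamma$ for each fixed $\tau$,'' the symmetry $\tilde{H}(\gamma,1/\tau)=\tilde{H}(1-\gamma,1/\tau)$, and ``suitable monotonicity in $\tau$.'' Per-curve concavity in $\gamma$ and monotonicity of $\tilde{H}$ in $\tau$ at fixed $\gamma$ do not control how curves at different $\tau$ sit relative to one another once you account for the accompanying shift in the abscissa $\gamma+1/\tau$, which is what the constraint sees. What the paper actually proves and uses (its Lemma~\ref{2userlem2}) is the \emph{joint} concavity of $\tilde{H}$ in the pair $(\gamma,1/\tau)$: for any $\tau\ge 3$ and any $\mu_\tau$, one chooses $\beta_\tau$ so that $\beta_\tau(\mu_1,1)+(1-\beta_\tau)(\mu_\tau,1/\tau)=(\mu_2^\tau,1/2)$, and joint concavity then yields $\beta_\tau\tilde H(\mu_1,1)+(1-\beta_\tau)\tilde H(\mu_\tau,1/\tau)\le\tilde H(\mu_2^\tau,1/2)$, which is exactly the ``every $\tau\ge 3$ curve lies weakly below the $\{1,2\}$ envelope at the relevant abscissa'' statement you want. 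This joint concavity is genuinely stronger than per-$\tau$ concavity; the paper establishes it via the Legendre--Fenchel representation $\tilde H(\gamma,1/k)=-\sup_\lambda\{\gamma\lambda\log_2 e-\tfrac1k\log_2\sum_{i=0}^k e^{i\lambda}\}$ and a perspective-type argument. Without naming and proving that property, your reduction step does not go through. Secondary but worth flagging: your converse also silently invokes ``the paper's argument that capacity-achieving schemes must operate on $r_e+r_d=1$,'' and the restriction to $\mu_\tau\le\tfrac12$ requires the symmetry lemma plus the observation that shrinking $r_e$ frees up $r_d$; you gesture at this but should make it an explicit step as the paper does in Lemma~\ref{2userlem4}. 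The rest (Fano chain, Jensen, random-coding achievability with the two block types, and the final numerical optimization) is sound and matches the paper.
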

\noindent
The function $\tilde{H}(\gamma,\frac{1}{k})$ indicates the ratio of  highest amount of information that a random variable with support $\{0,...,k\}$ and mean $k\gamma$ can contain, to the number of bits used for transmitting this information.
We first investigate some of the properties of the function $\tilde{H}$.

%2userLEMMA0
\begin{lemma}
\label{2userlem0}
Let $U_k\sim \textit{Unif}(\{0,1,...,k\})$.
The distribution which achieves the optimum value in (\ref{eq:h}) is the tilted version of $\textit{Unif}(\{0,1,...,k\})$ with parameter $\lambda$; that is
\begin{equation*}
\begin{aligned}
P_{X}(i)=\frac{e^{i\lambda}}{\sum_{i=0}^{k}e^{i\lambda}},~~~i\in\{0,1,...,k\},
\end{aligned}
\label{eq:dist}
\end{equation*}
 where $\lambda=(\psi'_{U_k})^{-1}(k\gamma)$,
where the function $\psi'_{U_k}(\cdot)$ is the derivative of the log-moment generating function of $U_k$.
\end{lemma}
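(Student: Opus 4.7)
The plan is to recognize this as a classical maximum-entropy problem over a finite support subject to a mean constraint, and to solve it via Lagrange multipliers; the tilted-uniform form will then emerge naturally, and identifying the tilting parameter with the inverse of the derivative of the log-MGF is essentially a computation.

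First I would write the problem explicitly as maximizing $-\sum_{i=0}^{k}p_i\log p_i$ over $(p_0,\ldots,p_k)$ subject to the two linear constraints $\sum_{i=0}^{k}p_i=1$ and $\sum_{i=0}^{k}ip_i=k\gamma$. Since $-\sum p_i\log p_i$ is strictly concave in $p$ and the feasible set is a non-empty compact convex polytope (for any $\gamma\in[0,1]$, since $k\gamma\in[0,k]$ is a convex combination of $0$ and $k$), a unique maximizer exists and is characterized by the KKT conditions. Forming the Lagrangian
\begin{equation*}
\mathcal{L}(p,\mu,\lambda)=-\sum_{i=0}^{k}p_i\log p_i+\mu\Bigl(\sum_{i=0}^{k}p_i-1\Bigr)+\lambda\Bigl(\sum_{i=0}^{k}ip_i-k\gamma\Bigr),
\end{equation*}
and setting $\partial\mathcal{L}/\partial p_i=0$ for each $i$ yields $-\log p_i-1+\mu+i\lambda=0$, hence $p_i\propto e^{i\lambda}$. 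Enforcing $\sum_i p_i=1$ gives exactly the tilted form
\begin{equation*}
P_X(i)=\frac{e^{i\lambda}}{\sum_{j=0}^{k}e^{j\lambda}},\qquad i\in\{0,1,\ldots,k\}.
\end{equation*}

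Next I would pin down $\lambda$ via the mean constraint. Writing $U_k\sim\textit{Unif}(\{0,\ldots,k\})$, the log-MGF is $\psi_{U_k}(\lambda)=\log\mathbb{E}[e^{\lambda U_k}]=\log\Bigl(\tfrac{1}{k+1}\sum_{j=0}^{k}e^{j\lambda}\Bigr)$. A direct differentiation gives $\psi'_{U_k}(\lambda)=\sum_{i=0}^{k}i\,e^{i\lambda}/\sum_{j=0}^{k}e^{j\lambda}$, which is exactly $\mathbb{E}_{P_X}[X]$ for the tilted distribution above. The mean constraint therefore becomes $\psi'_{U_k}(\lambda)=k\gamma$. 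Since $\psi_{U_k}$ is strictly convex (the variance of $U_k$ under any tilting is positive whenever $k\ge 1$), $\psi'_{U_k}$ is strictly increasing and its image is the open interval $(0,k)$ with closure $[0,k]$; so $\lambda=(\psi'_{U_k})^{-1}(k\gamma)$ is well-defined for $\gamma\in(0,1)$, with the degenerate limits $\lambda\to-\infty$ (point mass at $0$) for $\gamma=0$ and $\lambda\to+\infty$ (point mass at $k$) for $\gamma=1$ giving the natural limiting distributions.

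Finally I would remark on optimality: strict concavity of entropy together with convexity of the constraint set guarantees the KKT critical point is the unique global maximizer, so no second-order check is needed. The main thing to be careful about is the identification of the normalizer with the log-MGF of $U_k$, since writing $\sum_j e^{j\lambda}=(k+1)\mathbb{E}[e^{\lambda U_k}]$ is what makes the constant $(k+1)$ drop out of the derivative and produces the clean statement $\lambda=(\psi'_{U_k})^{-1}(k\gamma)$; the boundary cases $\gamma\in\{0,1\}$ merely need to be noted as limiting degenerate distributions, and there is no real obstacle beyond bookkeeping.
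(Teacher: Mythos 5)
Your proof is correct and follows essentially the same route as the paper's: both set up the Lagrangian for maximum entropy under the mean and normalization constraints, derive $P_X(i)\propto e^{i\lambda}$ from the stationarity condition, normalize to obtain the tilted uniform distribution, and then identify $\lambda$ by observing that the mean constraint reads $\psi'_{U_k}(\lambda)=k\gamma$. The extra remarks you add on strict concavity guaranteeing uniqueness and on the degenerate boundary cases $\gamma\in\{0,1\}$ are a mild strengthening of the paper's argument but do not change the approach.
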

\noindent
See Appendix \ref{app:B} for the proof of Lemma \ref{2userlem0}.

%2userLEMMA1
\begin{lemma}
\label{2userlem1}
The function
%$(\gamma,\frac{1}{k})\mapsto\tilde{H}(\gamma,\frac{1}{k})$
$\tilde{H}$
could be computed using the following expression.
\begin{equation}
\displaystyle\tilde{H}(\gamma,\frac{1}{k})=\frac{1}{k}[\log_2(k+1)-\psi^*_{U_k}(k\gamma)\log_2 e],
\label{eq:H_rf}
\end{equation}
where $U_k\sim \textit{Unif}(\{0,1,...,k\})$, and the function $\psi^*_{U_k}(\cdot)$ is the rate function given by the Legendre-Fenchel transform of the log-moment generating function $\psi_{U_k}(\cdot)$, 
\begin{equation}
\displaystyle\psi^*_{U_k}(\gamma)=\sup_{\lambda\in\mathbb{R}}\{\lambda\gamma-\psi_{U_k}(\lambda)\}.
\label{eq:ratefunc}
\end{equation}
\end{lemma}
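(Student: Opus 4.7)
The plan is to substitute the optimal tilted distribution from Lemma \ref{2userlem0} directly into $H(X)$ and simplify the resulting expression in terms of $\psi_{U_k}$ and its Legendre--Fenchel transform.

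First, I rewrite the normalization constant of the tilted distribution in terms of the log-moment generating function. Since $U_k \sim \textit{Unif}(\{0,1,\dots,k\})$, we have
\begin{equation*}
\psi_{U_k}(\lambda)=\log\mathbb{E}[e^{\lambda U_k}]=\log\!\left(\frac{1}{k+1}\sum_{i=0}^{k}e^{i\lambda}\right),
\end{equation*}
so $\sum_{i=0}^{k}e^{i\lambda}=(k+1)e^{\psi_{U_k}(\lambda)}$ and the optimal distribution from Lemma \ref{2userlem0} can be written as $P_X(i)=e^{i\lambda}/[(k+1)e^{\psi_{U_k}(\lambda)}]$, where $\lambda=(\psi'_{U_k})^{-1}(k\gamma)$.

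Next, I compute the entropy of the optimal distribution. Taking logs,
\begin{equation*}
-\log_2 P_X(i)=\log_2(k+1)+\psi_{U_k}(\lambda)\log_2 e - i\lambda\log_2 e,
\end{equation*}
so averaging against $P_X$ and using $\mathbb{E}[X]=k\gamma$ yields
\begin{equation*}
H(X)=\log_2(k+1)-\bigl[\lambda\,k\gamma-\psi_{U_k}(\lambda)\bigr]\log_2 e.
\end{equation*}
The term in brackets is exactly the value of the Legendre--Fenchel functional $\lambda\mapsto \lambda(k\gamma)-\psi_{U_k}(\lambda)$ at the stationary point $\lambda=(\psi'_{U_k})^{-1}(k\gamma)$. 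Since $\psi_{U_k}$ is convex (a standard fact about log-MGFs, following from H\"older's inequality), this stationary point realizes the supremum in (\ref{eq:ratefunc}), so
\begin{equation*}
\lambda\,k\gamma-\psi_{U_k}(\lambda)=\psi^*_{U_k}(k\gamma).
\end{equation*}
Dividing $H(X)$ by $k$ gives the claimed formula (\ref{eq:H_rf}).

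There is no serious obstacle here; the proof is essentially bookkeeping once Lemma \ref{2userlem0} is in hand. The only subtlety worth noting is the boundary behavior at $\gamma\in\{0,1\}$, where $\lambda\to\pm\infty$ and the tilted distribution degenerates to a point mass at $0$ or $k$. In these cases one verifies by direct inspection that both sides of (\ref{eq:H_rf}) vanish, so the formula extends continuously to the full domain $[0,1]\times\{1/k:k\in\mathbb{N}\}$.
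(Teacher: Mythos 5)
Your proof is correct, but it takes a different route from the paper. The paper's proof uses the identity $H(X)=\log(k+1)-D(P_X\|U_k)$ to convert the entropy maximization into a KL-divergence minimization, and then simply cites the standard information-projection fact $\min_{\mathbb{E}[X]=k\gamma}D(P_X\|U_k)=\psi^*_{U_k}(k\gamma)\log_2 e$ to conclude. Your approach instead plugs the explicit tilted distribution from Lemma~\ref{2userlem0} into $H(X)$ and computes directly: you express the normalization constant via $\psi_{U_k}$, expand $-\log_2 P_X(i)$, average, and then recognize $\lambda\,k\gamma - \psi_{U_k}(\lambda)$ at the stationary point $\lambda=(\psi'_{U_k})^{-1}(k\gamma)$ as the value of the Legendre--Fenchel supremum by convexity of $\psi_{U_k}$. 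In effect you reprove the information-projection identity in this special case rather than quoting it. The paper's argument is shorter and leans on a known result; yours is more self-contained, makes explicit how Lemma~\ref{2userlem0} and Lemma~\ref{2userlem1} fit together, and even flags the degenerate boundary cases $\gamma\in\{0,1\}$, which the paper passes over. Both are valid; the tradeoff is brevity versus transparency.
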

\noindent
In order to prove this lemma, first we note that for any random variable $X$ defined over the set $\{0,1,...,k\}$,
\begin{equation*}
\begin{aligned}
H(X)&=\sum_{i=0}^kP_X(i)\log{\frac{1}{P_X(i)}}\\
&=\sum_{i=0}^kP_X(i)\log{(k+1)}-\sum_{i=0}^kP_X(i)\log{\frac{P_X(i)}{{\frac{1}{k+1}}}}\\
&=\log{(k+1)}-D(P_X||U_k),
\end{aligned}
\end{equation*}
where $D(P_X||U_k)$ denotes the KL-divergence between $P_X$ and $U_k$. Therefore, in order to maximize $H(X)$, we need to minimize $D(P_X||U_k)$. Using the following well-known fact\cite{polyanskiy2014lecture}, concludes the lemma.
\begin{equation}
\displaystyle\min_{\mathbb{E}[X]=k\gamma}D(P_X||U_k)=\psi^*_{U_k}(k\gamma)\log_2 e.
\label{eq:projection}
\end{equation}

\noindent
Figure \ref{fig:htilde} shows the function $\tilde{H}(\gamma,\frac{1}{k})$ for different values of $\gamma$ and $k\in\{1,2,3\}$.

\begin{figure}[t]
\centering
\begin{subfigure}{.5\textwidth}
  \centering
  \includegraphics[scale=0.33]{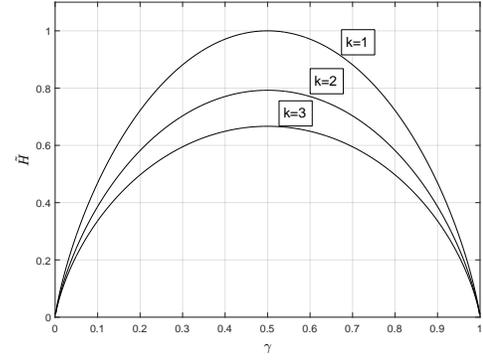}
% \caption{}
%  \label{fig:sub1}
\end{subfigure}\\
\begin{subfigure}{.5\textwidth}
  \centering
  \includegraphics[scale=0.33]{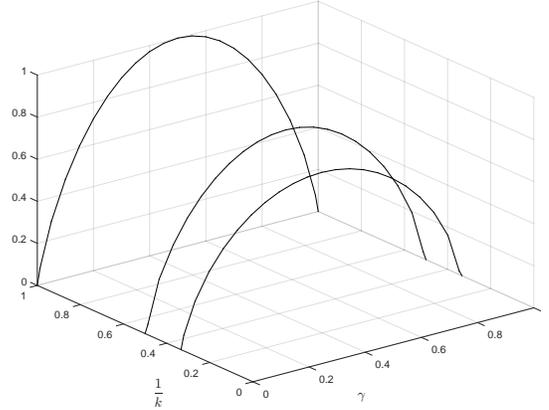}
%  \caption{}
%  \label{fig:sub2}
\end{subfigure}
\caption{$\tilde{H}(\gamma,\frac{1}{k})$ for different values of $\gamma$ and $k\in\{1,2,3\}$.\vspace{-0mm}}
\label{fig:htilde}
\end{figure}

%\begin{figure}[H]
%\centering
%\includegraphics[scale=0.6]{2user/htilde.jpg}
%\caption{$\tilde{H}(\gamma,\frac{1}{k})$ for different values of $\gamma$ and $k\in\{1,2,3\}$.}
%\label{fig:htilde}
%\end{figure}

%2userLEMMA2
\begin{lemma}
\label{2userlem2}
The function $\tilde{H}(\cdot,\cdot)$ is concave in pair $\displaystyle(\gamma,\frac{1}{k})$ in the sense that for integers $k_1, k_2, k_3$, and for values $0\le\gamma_1,\gamma_2,\gamma_3\le1$, and for $\alpha\in[0,1]$, such that $\displaystyle\alpha(\gamma_1,\frac{1}{k_1})+(1-\alpha)(\gamma_3,\frac{1}{k_3})=(\gamma_2,\frac{1}{k_2})$, we have
\begin{equation}
\alpha\tilde{H}(\gamma_1,\frac{1}{k_1})+(1-\alpha)\tilde{H}(\gamma_3,\frac{1}{k_3})\le\tilde{H}(\gamma_2,\frac{1}{k_2}).
\label{eq:toolem}
\end{equation}
\end{lemma}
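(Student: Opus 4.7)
My plan is to recast the statement as a joint concavity property of the maximum-entropy function and then invoke the dual (Legendre--Fenchel) representation from Lemma \ref{2userlem1}.

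First, I would set $m_i := k_i \gamma_i$ and define $H^*(k,m) := k\,\tilde{H}(m/k,1/k)$, which by (\ref{eq:h}) is the maximum entropy (in bits) of a random variable on $\{0,1,\dots,k\}$ with mean $m$. Introducing the weights $p_1 := \alpha k_2/k_1$ and $p_3 := (1-\alpha) k_2/k_3$, the hypothesis $\alpha/k_1 + (1-\alpha)/k_3 = 1/k_2$ yields $p_1 + p_3 = 1$, and a short algebraic check gives $p_1 k_1 + p_3 k_3 = k_2$ and $p_1 m_1 + p_3 m_3 = m_2$. Multiplying (\ref{eq:toolem}) through by $k_2$ therefore converts the claim exactly into the joint concavity inequality
\begin{equation*}
p_1 H^*(k_1,m_1) + p_3 H^*(k_3,m_3) \leq H^*\bigl(p_1 k_1 + p_3 k_3,\; p_1 m_1 + p_3 m_3\bigr),
\end{equation*}
so the task reduces to showing that $H^*$ is jointly concave in $(k,m)$ when $k$ is allowed to vary over the positive reals.

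Using (\ref{eq:H_rf}) together with the Legendre--Fenchel definition (\ref{eq:ratefunc}) and the closed-form $\psi_{U_k}(\lambda) = \log\bigl[(e^{(k+1)\lambda}-1)/((k+1)(e^\lambda-1))\bigr]$, I would rewrite $-\psi^*_{U_k}(m)\log_2 e$ as an infimum to obtain
\begin{equation*}
H^*(k,m) = \inf_{\lambda \in \mathbb{R}} \bigl[\log_2 Z(\lambda,k) - \lambda m \log_2 e\bigr], \quad Z(\lambda,k) := \sum_{i=0}^{k} e^{i\lambda} = \frac{e^{(k+1)\lambda} - 1}{e^\lambda - 1},
\end{equation*}
extended to real $k$ via the closed form. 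Since the infimum of any family of concave functions is concave, it suffices to verify that for every fixed $\lambda$ the bracketed expression is jointly concave in $(k,m)$; it is linear in $m$ and depends on $k$ only through $\log_2 Z(\lambda,k)$, so only the latter requires work. Treating $k$ as a real variable and differentiating twice yields
\begin{equation*}
\frac{\partial^2}{\partial k^2}\log Z(\lambda,k) = -\,\frac{\lambda^2\, e^{(k+1)\lambda}}{\bigl(e^{(k+1)\lambda} - 1\bigr)^2} \leq 0,
\end{equation*}
establishing concavity in $k$, with the degenerate case $\lambda = 0$ covered by continuity ($\log_2 Z(0,k) = \log_2(k+1)$ is itself concave). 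Combined with linearity in $m$, this gives joint concavity of the bracket, hence concavity of $H^*$, and therefore the lemma.

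The main obstacle I anticipate is less about the concavity argument itself --- the second-derivative check is essentially a one-line computation and the infimum-of-concave-functions step is standard --- and more about the bookkeeping to derive the infimum representation cleanly (keeping base-conversion factors between $\log$ and $\log_2$ straight, and using $Z(\lambda,k)$ rather than the two-log form so that $\lambda < 0$ does not introduce spurious negative arguments to the logarithm).
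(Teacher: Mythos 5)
Your proof is correct and takes essentially the same route as the paper: both pass to the Legendre--Fenchel representation, write $\tilde{H}$ (or its unnormalized counterpart) as an infimum over $\lambda$ of per-$\lambda$ jointly concave functions of the pair, and reduce the claim to the sign of $\partial^2_k \log\sum_{i=0}^k e^{i\lambda}$. The only difference is bookkeeping: you reparametrize to $H^*(k,m)=k\,\tilde H(m/k,1/k)$ and establish joint concavity there after scaling by $k_2$, whereas the paper works directly in the $(\gamma,1/k)$ coordinates and invokes the perspective identity $x\mapsto x\hat g(1/x)$ to transfer concavity in $k$ to convexity in $1/k$; the two are interchangeable and both hinge on the same one-line second-derivative computation, which you write out explicitly where the paper asserts it can ``be easily seen.''
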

\noindent
See Appendix \ref{app:C} for the proof of Lemma \ref{2userlem2}.

Substituting (\ref{eq:H_rf}) in (\ref{eq:capacit}) and solving it, the capacity of the timing channel in the shared FCFS scheduler with service rate 1 depicted in Figure \ref{fig:2usersysmod} is equal to $0.8114$ bits per time slot, achieved by $\alpha = 0.177$, $\gamma_1 = 0.43$ and $\gamma_2 = 0.407$.

%2userLEMMA3
\begin{lemma}
\label{2userlem3}
For all $\gamma\in[0,1]$ and $k\in\mathbb{N}$, we have 
\[
\displaystyle\tilde{H}(\gamma, \frac{1}{k})=\tilde{H}(1-\gamma, \frac{1}{k}).
\]
\end{lemma}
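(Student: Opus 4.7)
The plan is to exploit the symmetry of the support set $\{0,1,\ldots,k\}$ under the involution $i \mapsto k-i$. Specifically, if $X$ is any random variable supported on $\{0,1,\ldots,k\}$ with $\mathbb{E}[X] = k\gamma$, then the reflected variable $X' \triangleq k - X$ is also supported on $\{0,1,\ldots,k\}$ and satisfies $\mathbb{E}[X'] = k - k\gamma = k(1-\gamma)$. Hence $X'$ is feasible for the optimization problem defining $\tilde{H}(1-\gamma, \tfrac{1}{k})$ whenever $X$ is feasible for the one defining $\tilde{H}(\gamma, \tfrac{1}{k})$.

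The key step is then to observe that the map $X \mapsto k-X$ is a bijection between the two feasible sets and preserves entropy, since entropy is invariant under any bijective relabeling of the alphabet: $H(X) = H(k-X)$. Therefore, the two constrained-maximization problems in (\ref{eq:h}) have identical objective values on corresponding distributions, and their maxima coincide. Dividing by $k$ on both sides of the maxima gives the claimed equality.

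To make this fully rigorous, I would state it as a two-way inclusion: take any optimizer $X^{\star}$ for $\tilde{H}(\gamma,\tfrac{1}{k})$ and note that $k - X^{\star}$ is feasible for $\tilde{H}(1-\gamma,\tfrac{1}{k})$ with the same entropy, giving one inequality; the reverse inequality follows by applying the involution again (the map is its own inverse). This establishes the desired equality.

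There is no substantial obstacle here: the result is essentially a direct consequence of the symmetry of $\textit{Unif}(\{0,1,\ldots,k\})$ about $k/2$. As a sanity check, one could alternatively verify the symmetry through the closed-form expression in Lemma \ref{2userlem1}, by noting that the rate function $\psi^{\star}_{U_k}$ satisfies $\psi^{\star}_{U_k}(k\gamma) = \psi^{\star}_{U_k}(k(1-\gamma))$ because the cumulant generating function $\psi_{U_k}(\lambda)$ obeys $\psi_{U_k}(\lambda) = k\lambda + \psi_{U_k}(-\lambda)$, but the bijection argument above is shorter and more transparent.
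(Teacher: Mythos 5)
Your proposal is correct and takes essentially the same approach as the paper: the paper's Appendix~D proof also defines the reflected distribution $Q_X(i)=P_X^*(k-i)$, observes that entropy is preserved and the mean maps to $k(1-\gamma)$, and concludes by proving the two inequalities separately via the involution. Your observation that the closed-form in Lemma~\ref{2userlem1} would also yield the symmetry is a valid alternative route that the paper does not pursue.
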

\noindent
See Appendix \ref{app:D} for the proof of Lemma \ref{2userlem3}.

In the following, the proof of Theorem \ref{2userthm} is given. The proof is based on converse and achievability arguments.
%\vspace{-5mm}
\subsection{Converse}
In the converse side, the ultimate goal is to prove that
\begin{equation*}
\begin{aligned}
\displaystyle &C\le\max_{\alpha , \gamma_1 , \gamma_2} \alpha\tilde{H}(\gamma_1,1)+(1-\alpha)\tilde{H}(\gamma_2,\frac{1}{2})\\
&\text{s.t.}~~~
\alpha(\gamma_1+1)+(1-\alpha)(\gamma_2+\frac{1}{2})=1,
\end{aligned}
\label{eq:capacity}
\end{equation*}
where $0\leq\alpha\leq1$ and $0\leq\gamma_1, \gamma_2\leq\frac{1}{2}$.
We break the proof into two lemmas. First in Lemma \ref{2userlem4}, we find an upper bound on the information rate, which consists of a weighted summation of possible maximum information rates for different inter-arrival times of the packets in the probe stream, which satisfies the stability constraint. Then in Lemma \ref{2userlem5}, we upper bound the summation with one which only corresponds to inter-arrival times of 1 and 2.
%2userLEMMA4
\begin{lemma}
\label{2userlem4}
For the timing channel in a shared FCFS scheduler with service rate 1 depicted in Figure \ref{fig:2usersysmod}, any code consisting of a codebook of $M$ equiprobable binary codewords,
where messages take on average $n$ time slots to be received, satisfies
%For any $(n,M,\epsilon)$-code we have
\begin{equation}
\begin{aligned}
\displaystyle\frac{1}{n}\log{M}\leq\sum_{\tau=1}^n[\pi_{\tau}\tilde{H}(\mu_{\tau}, \frac{1}{\tau})]+\epsilon_n,
\end{aligned}
\label{eq:generalsum}
\end{equation}
where
$\sum_{\tau=1}^n\pi_{\tau}(\mu_{\tau}+\frac{1}{\tau})=1$, and for all $\tau$, $0\le\mu_{\tau}\le\frac{1}{2}$.
%\label{eq:eqconst}
In this expression, $\epsilon_n=\frac{1}{n}(H(P_e)+P_e\log_2{(M-1)})$, $\pi_{\tau}$ is the portion of time that user $U_d$ sends packets with inter-arrival time equal to $\tau$ in the probe stream, and $\mu_{\tau}$ is $U_e^T$'s average packet rate when the inter-arrival time is equal to $\tau$.
\end{lemma}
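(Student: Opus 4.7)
The plan is to apply the standard Fano--data-processing pipeline and then exploit the block structure of the probe stream. Starting from Fano's inequality applied to uniform $W$, one has $\log M \le I(W;\hat W) + n\epsilon_n$ with $\epsilon_n = \frac{1}{n}(H(P_e)+P_e\log(M-1))$. Data processing along the Markov chain (\ref{eq:chain}) gives $I(W;\hat W)\le I(X^m;Y^m)$; under the queue-depth maintenance assumption recalled after that Markov chain, $Y_i=X_i$ holds deterministically, so $I(X^m;Y^m)=H(X^m)$. Subadditivity of entropy then yields $\log M \le \sum_{i=1}^{m} H(X_i) + n\epsilon_n$.

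Next I would partition the probe indices by inter-arrival time. Let $I_\tau=\{i:T_i=\tau\}$ and $n_\tau=|I_\tau|$, so that $\sum_\tau n_\tau \tau = n$. When $i\in I_\tau$, the variable $X_i$ is supported on $\{0,1,\dots,\tau\}$. Define $\mu_\tau=\frac{1}{n_\tau\tau}\sum_{i\in I_\tau}\mathbb{E}[X_i]$, which is exactly $U_e$'s average per-slot packet rate inside the length-$\tau$ gaps, matching the lemma statement. The mixture distribution $\bar P_\tau=\frac{1}{n_\tau}\sum_{i\in I_\tau}P_{X_i}$ then has mean $\tau\mu_\tau$, and concavity of the Shannon entropy gives $\frac{1}{n_\tau}\sum_{i\in I_\tau}H(X_i)\le H(\bar P_\tau)$. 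Applying the definition of $\tilde H$ in (\ref{eq:h}) yields $H(\bar P_\tau)\le \tau\tilde H(\mu_\tau, 1/\tau)$.

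Setting $\pi_\tau = n_\tau\tau/n$ and assembling these pieces gives $\frac{1}{n}\log M \le \sum_\tau \pi_\tau \tilde H(\mu_\tau, 1/\tau)+\epsilon_n$. The operating point $r_e+r_d=1$ on the stability boundary translates, via $r_d=\sum_\tau \pi_\tau/\tau$ and $r_e=\sum_\tau \pi_\tau\mu_\tau$, into exactly $\sum_\tau \pi_\tau(\mu_\tau+1/\tau)=1$. To enforce the range $\mu_\tau\in[0,1/2]$ I would invoke Lemma \ref{2userlem3}: since $\tilde H(\mu,1/\tau)=\tilde H(1-\mu,1/\tau)$, any $\tau$ with $\mu_\tau>1/2$ can be replaced by $1-\mu_\tau$ without altering its entropy contribution, and the resulting slack in the rate constraint can be absorbed by enlarging $\pi_1$ with $\mu_1=0$ (which only raises the right-hand side). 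The step I expect to require the most care is precisely this final symmetrization and re-normalization, since it must simultaneously preserve the equality rate constraint and keep the upper bound valid; the remaining pieces are essentially mechanical once $Y_i=X_i$ is in hand.
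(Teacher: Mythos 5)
Your proposal is correct and follows essentially the same pipeline as the paper's proof (Fano $\to$ data processing along the Markov chain $\to$ reduce to $H(X^m)$ $\to$ partition by inter-arrival time $\to$ Jensen $\to$ identify the rate constraint $\to$ invoke Lemma~\ref{2userlem3} to restrict $\mu_\tau$ to $[0,\tfrac12]$). There is one mildly different choice of tool at the Jensen step: you average the per-index distributions $P_{X_i}$ within each group $I_\tau$ and use concavity of Shannon entropy, $\tfrac{1}{n_\tau}\sum_{i\in I_\tau} H(X_i)\le H(\bar P_\tau)$, before applying the definition of $\tilde H$ once per $\tau$. The paper instead first replaces each $H(X_j)$ by its constrained maximum $\tau_j\tilde H(\xi_j,1/\tau_j)$ (via~\eqref{eq:maxi1}) and then applies Jensen to $\tilde H$ itself using its concavity from Lemma~\ref{2userlem2}. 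The two routes are logically interchangeable, but yours is slightly more elementary since it leans only on concavity of $H(\cdot)$ and does not explicitly invoke Lemma~\ref{2userlem2} at this stage. Two small remarks on hygiene: (i) you assert $Y_i=X_i$ to get $I(X^m;Y^m)=H(X^m)$; the paper instead just drops the nonnegative term $H(X^m\mid Y^m,\tau^m)$ to get $I\le H$, which is safer and is all that is needed here; (ii) you should carry the conditioning on $\tau^m$ explicitly (or at least state that the probe stream is fixed and known), which the paper does, to justify that $X_i$ has support $\{0,\dots,\tau_i\}$ and that the $I_\tau$ partition is well defined. Your handling of the final symmetrization matches the paper's in spirit; both are somewhat informal in converting the slack created by reflecting $\mu_\tau\mapsto 1-\mu_\tau$ back into a saturated constraint, but your concrete suggestion (absorb slack by enlarging $\pi_1$ at $\mu_1=0$) is a reasonable instantiation of the paper's remark that $U_d$ can increase his probing rate.
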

\begin{proof}
We first we note that
%we assume that $U_e$ has the side information $\tau^m$, which is the inter-arrival times of the probe stream.
%for any $(n,M,\epsilon)$-code we have:
\begin{equation*}
\begin{aligned}
\displaystyle\frac{1}{n}\log{M}&\stackrel{(a)}{=}\frac{1}{n}H(W)\\
&\stackrel{(b)}{=}\frac{1}{n}H(W|\tau^m)\\
&=\frac{1}{n}I(W;\hat{W}|\tau^m)+\frac{1}{n}H(W|\hat{W}, \tau^m)\\
&\overset{(c)}{\le}\frac{1}{n}I(W;\hat{W}|\tau^m)+\epsilon_n\\
&\overset{(d)}{\le}\frac{1}{n}I(X^m;Y^m|\tau^m)+\epsilon_n,
\end{aligned}
\end{equation*}
where $(a)$ holds because $W$ is a uniform random variable over the set of messages $\{1,...,M\}$, $(b)$ follows from the fact that the chosen message is independent of  the inter-arrival time of decoder's packets, $(c)$ follows from Fano's inequality with $\epsilon_n=\frac{1}{n}(H(P_e)+P_e\log_2{(M-1)})$, and $(d)$ follows from data processing inequality in Markov chain in (\ref{eq:chain}). Therefore,
\begin{equation*}
\begin{aligned}
\displaystyle\frac{1}{n}\log{M}&\leq\frac{1}{n}[H(X^m|\tau^m)-H(X^m|Y^m, \tau^m)]+\epsilon_n\\
&\leq\frac{1}{n}H(X^m|\tau^m)+\epsilon_n\\
&\leq\frac{1}{n}\sum_{j=1}^mH(X_j|\tau^m)+\epsilon_n\\
&\leq\frac{1}{n}\sum_{j=1}^m\max_{P_{X_j|\tau^m}} H(X_j|\tau^m)+\epsilon_n.
\end{aligned}
\end{equation*}
In the maximization above, the mean of the distribution $P_{X_j|\tau^m}$ is $\mathbb{E}[X_j|\tau^m]$.
As mentioned in Section \ref{sysdesc}, in order to find the maximum information rate while having stability, we are interested in the asymptotic regime in which the operating point is converging to the line $r_e + r_d = 1$. Therefore, the information rate is upper bounded by having the set of means, $\{\mathbb{E}[X_1|\tau^m],\mathbb{E}[X_2|\tau^m],...,\mathbb{E}[X_m|\tau^m]\}$ satisfying the constraint $\frac{1}{n}\sum_{j=1}^m\mathbb{E}[X_j|\tau^m]+r_d= 1$.
Let $\xi_j=\frac{\mathbb{E}[X_j|\tau^m]}{\tau_j}$. Using (\ref{eq:h}), we have
\begin{equation}
\max_{\substack{P_{X_j|\tau^m}\\  \mathbb{E}[X_j|\tau^m]=\tau_j\xi_j}}H(X_j|\tau^m)=\tau_j\tilde{H}(\xi_j, \frac{1}{\tau_j}),
\label{eq:maxi1}
\end{equation}
where as mentioned in Lemma \ref{2userlem0}, the distribution for each $X_j$ which achieves the maximum value in (\ref{eq:maxi1}) is the tilted distribution of $U_{\tau_j}$ with parameter $\lambda$, such that
$\lambda=(\psi'_{U_{\tau_j}})^{-1}(\tau_j\xi_j)$.
%where, $\psi'_{U_{\tau_j}}(.)$ is the derivative of log moment generating function of the uniform distribution on $\{0,1,...,\tau_j\}$. \\
Therefore, we will have:
\begin{equation*}
\begin{aligned}
\displaystyle\frac{1}{n}\log{M}\leq\frac{1}{n}\sum_{j=1}^m\tau_j\tilde{H}(\xi_j, \frac{1}{\tau_j})+\epsilon_n,
\end{aligned}
\end{equation*}
such that the set $\{ \xi_1,\xi_2,...,\xi_m\}$ satisfies the constraint  $\frac{1}{n}\sum_{j=1}^m\tau_j\xi_j+r_d= 1$.
The inter-arrival times take values in the set $\{1,2,...,n\}$.
Therefore, in the summation above we can fix the value of inter-arrival time on the value $\tau$ and count the number of times that $\tau_j$ has that value. Defining $m_\tau$ as the number of times that the inter-arrival time is equal to $\tau$ (note that $n = \sum_{\tau=1}^n \tau\cdot m_{\tau}$), we can break the summation above as follows.
\begin{equation*}
\begin{aligned}
\displaystyle\frac{1}{n}\log{M}&\leq\frac{1}{n}\sum_{\tau=1}^n[\sum_{k=1}^{m_{\tau}}\tau\tilde{H}(\mu_{\tau,k}, \frac{1}{\tau})]+\epsilon_n\\
&=\frac{1}{n}\sum_{\tau=1}^n[\tau\sum_{k=1}^{m_{\tau}}\tilde{H}(\mu_{\tau,k}, \frac{1}{\tau})]+\epsilon_n\\
&=\frac{1}{n}\sum_{\tau=1}^n[\tau\cdot m_{\tau}\sum_{k=1}^{m_{\tau}}\frac{1}{m_{\tau}}\tilde{H}(\mu_{\tau,k}, \frac{1}{\tau})]+\epsilon_n,
\end{aligned}
\end{equation*}
where $\displaystyle\mu_{\tau,k}$ is equal to the $k$-th $\xi_j$ which has $\tau_j = \tau$.

By Lemma \ref{2userlem2}, the function $\tilde{H}(\cdot,\cdot)$ is a concave function of its first argument. Therefore, by Jensen's inequality,
\begin{equation}
\sum_{k=1}^{m_{\tau}}\frac{1}{m_{\tau}}\tilde{H}(\mu_{\tau,k}, \frac{1}{\tau})\le\tilde{H}(\mu_{\tau}, \frac{1}{\tau}),
\label{eq:conv1p}
\end{equation}
where $\mu_{\tau}=\frac{1}{m_{\tau}}\sum_{k=1}^{m_{\tau}}\mu_{\tau,k}$.
Using (\ref{eq:conv1p}) and the equation $n = \sum_{\tau=1}^n \tau\cdot m_{\tau}$, we have
\begin{equation}
\begin{aligned}
\displaystyle\frac{1}{n}\log{M}&\leq\sum_{\tau=1}^n[\frac{\tau\cdot m_{\tau}}{\sum_{\tau=1}^n \tau\cdot m_{\tau}}\tilde{H}(\mu_{\tau}, \frac{1}{\tau})]+\epsilon_n\\
&=\sum_{\tau=1}^n[\pi_{\tau}\tilde{H}(\mu_{\tau}, \frac{1}{\tau})]+\epsilon_n,
\end{aligned}
\label{eq:generalsuminproof}
\end{equation}
where $\pi_{\tau} = (\tau\cdot m_{\tau})/(\sum_{\tau=1}^n \tau\cdot m_{\tau})$.

The packet rates of the users could be written as follows.
\begin{equation*}
\begin{aligned}
r_e&=\frac{1}{n}\sum_{j=1}^m\tau_j\xi_j = \frac{1}{n}\sum_{\tau=1}^n\sum_{k=1}^{m_{\tau}}\tau\mu_{\tau,k}\\
&= \frac{1}{n}\sum_{\tau=1}^n\tau m_{\tau}\frac{1}{m_{\tau}}\sum_{k=1}^{m_{\tau}}\mu_{\tau,k}= \sum_{\tau=1}^n\frac{1}{n}\tau m_{\tau}\mu_{\tau}\\
&= \sum_{\tau=1}^n\frac{\tau\cdot m_{\tau}}{\sum_{\tau=1}^n \tau\cdot m_{\tau}}\mu_{\tau}=\sum_{\tau=1}^n\pi_{\tau}\mu_{\tau},
\end{aligned}
\end{equation*}
and
\begin{equation*}
r_d=\sum_{\tau=1}^n\pi_{\tau}\frac{1}{\tau}.
\end{equation*}
Therefore, the constraint could be written as follows.
\begin{equation}
\displaystyle\sum_{\tau=1}^n\pi_{\tau}(\mu_{\tau}+\frac{1}{\tau}) = 1.
\label{eq:constrain2}
\end{equation}
Suppose the set of pairs $\{(\mu_{\tau},\frac{1}{\tau})\}_{\tau=1}^n$ satisfies (\ref{eq:constrain2}) and maximizes the right hand side of (\ref{eq:generalsuminproof}). By Lemma \ref{2userlem3}, there exists another set of pairs $\{(\hat{\mu}_{\tau},\frac{1}{\tau})\}_{\tau=1}^n$ with $\hat{\mu}_{\tau}$ defined as
\begin{equation*}
\hat{\mu}_{\tau} = \left\{
\begin{array}{l l}
\mu_{\tau} & \quad \text{if }0\leq\mu_{\tau}\leq\frac{1}{2},\\
1-\mu_{\tau} & \quad \text{if }\frac{1}{2}\leq\mu_{\tau}\leq1,
\end{array} \right.
\end{equation*}
that gives the same value for the right-hand side of (\ref{eq:generalsuminproof}), but it has $\sum_{\tau=1}^n\pi_{\tau}(\hat{\mu}_{\tau}+\frac{1}{\tau})\leq\sum_{\tau=1}^n\pi_{\tau}(\mu_{\tau}+\frac{1}{\tau})$. Therefore, $U_d$ can increase his packet rate and increase the information rate using the values $\hat{\mu}_{\tau}$.
Hence, in the maximizing set, for all $\tau$, we have $0\le\mu_{\tau}\le\frac{1}{2}$.
Therefore, the optimal operating point will be on the line $r_e + r_d = 1$, with $0\le r_e \le \frac{1}{2}$ and $\frac{1}{2}\le r_d \le 1$.
\end{proof}

Applying Lemma \ref{2userlem2}, we can replace all pairs of form $(\mu_{\tau},\frac{1}{\tau})$, $\tau\ge2$, with a single pair of form $(\mu,\frac{1}{2})$.
%2userLEMMA5
\begin{lemma}
\label{2userlem5}
For any set of pairs $\mathcal{S}=\{(\mu_{\tau},\frac{1}{\tau}),\tau\in[n]\}$, where for all $\tau$, $0\le\mu_{\tau}\le\frac{1}{2}$, with weights $\{\pi_{\tau},\tau\in[n]\}$
%gives rate $\sum_{\tau=1}^n{\pi}_{\tau}\tilde{H}({\mu}_{\tau}, \frac{1}{\tau})$ and has its
with operating point on the line $0\le r_e \le \frac{1}{2}$ and $\frac{1}{2}\le r_d \le 1$, there exists $0\leq\alpha\leq1$ and $0\leq\gamma_1, \gamma_2\leq\frac{1}{2}$ such that
\begin{equation*}
\alpha(\gamma_1+1)+(1-\alpha)(\gamma_2+\frac{1}{2})=1,
\end{equation*}
and
\begin{equation*}
\sum_{\tau=1}^n[{\pi}_{\tau}\tilde{H}({\mu}_{\tau}, \frac{1}{\tau})]\leq\alpha\tilde{H}(\gamma_1,1)+(1-\alpha)\tilde{H}(\gamma_2,\frac{1}{2}).
\end{equation*}
\end{lemma}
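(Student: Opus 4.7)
The plan is to apply Lemma \ref{2userlem2} in two stages: first, for each $\tau \geq 3$, combine the full $\tau$-mass with a carefully chosen piece of the $\tau = 1$ mass so that the combined pair lands exactly at $1/k = 1/2$; second, collapse the original $\tau = 2$ mass together with all newly manufactured $\tau = 2$ masses into a single $(\gamma_2, 1/2)$ term by a Jensen argument (which is again a consequence of Lemma \ref{2userlem2} with $k_1 = k_3 = 2$). The remaining un-borrowed $\tau = 1$ mass becomes the $(\gamma_1, 1)$ term with $\gamma_1 = \mu_1$.

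Concretely, for each $\tau \geq 3$ with $\pi_\tau > 0$, I would set $w_\tau = \pi_\tau(\tau-2)/\tau$, the unique value such that $(w_\tau \cdot 1 + \pi_\tau \cdot \tfrac{1}{\tau})/(w_\tau + \pi_\tau) = \tfrac{1}{2}$. Lemma \ref{2userlem2} then gives
\begin{equation*}
w_\tau \tilde{H}(\mu_1, 1) + \pi_\tau \tilde{H}(\mu_\tau, \tfrac{1}{\tau}) \;\leq\; (w_\tau + \pi_\tau)\,\tilde{H}(\mu_\tau', \tfrac{1}{2}),
\end{equation*}
where $\mu_\tau' = (w_\tau \mu_1 + \pi_\tau \mu_\tau)/(w_\tau + \pi_\tau)$. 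Taking $\alpha = \pi_1 - \sum_{\tau \geq 3} w_\tau$ (the leftover $\tau = 1$ mass) and $1 - \alpha = \pi_2 + \sum_{\tau \geq 3}(w_\tau + \pi_\tau)$, applying Jensen to all the $(\mu_2, 1/2)$ and $(\mu_\tau', 1/2)$ contributions collapses them into $(1-\alpha)\tilde{H}(\gamma_2, \tfrac{1}{2})$ with $\gamma_2$ being their weighted average of first coordinates.

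The main obstacle is showing feasibility, i.e.\ $\alpha \geq 0$. A direct calculation gives
\begin{equation*}
\sum_{\tau \geq 3} w_\tau \;=\; \sum_{\tau \geq 3} \pi_\tau - 2\!\sum_{\tau \geq 3} \pi_\tau/\tau \;=\; (1-\pi_1-\pi_2) - 2\bigl(r_d - \pi_1 - \pi_2/2\bigr),
\end{equation*}
which rearranges to $\pi_1 - \sum_{\tau \geq 3} w_\tau = 2r_d - 1$. The hypothesis $\mu_\tau \leq \tfrac{1}{2}$ together with the operating-point identity $r_e + r_d = 1$ established in Lemma \ref{2userlem4} forces $r_d \geq \tfrac{1}{2}$, giving $\alpha \geq 0$ as needed. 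The constraint $\alpha(\gamma_1 + 1) + (1-\alpha)(\gamma_2 + \tfrac{1}{2}) = 1$ then follows from straightforward bookkeeping: the reassignment preserves $r_e = \sum_\tau \pi_\tau \mu_\tau = \alpha \gamma_1 + (1-\alpha)\gamma_2$ and $r_d = \sum_\tau \pi_\tau/\tau = \alpha + (1-\alpha)/2$, whose sum equals $1$. Finally, $\gamma_1 = \mu_1 \in [0, \tfrac{1}{2}]$ and $\gamma_2$ is a convex combination of the $\mu_\tau \in [0, \tfrac{1}{2}]$, so both lie in $[0, \tfrac{1}{2}]$, completing the reduction. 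Degenerate cases ($\pi_1 = 0$ forces $\pi_\tau = 0$ for all $\tau \geq 3$ by the same feasibility identity, reducing to a trivial $\tau = 2$ term) are handled directly.
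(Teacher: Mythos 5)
Your proof is correct and takes essentially the same route as the paper: you borrow $\tau=1$ mass to fold each $\tau\ge3$ pair down to an inter-arrival-$2$ pair via Lemma~\ref{2userlem2}, then collapse all $(\cdot,\tfrac{1}{2})$ contributions with a second application of the lemma, your $w_\tau$ and $w_\tau+\pi_\tau$ exactly matching the paper's $\zeta_\tau\beta_\tau$ and $\zeta_\tau$. Your explicit feasibility verification $\alpha=2r_d-1\ge0$ under $r_d\ge\tfrac{1}{2}$, together with the $\pi_1=0$ degenerate-case remark, is a welcome addition that the paper's proof glosses over (the paper sets $\pi_1=\zeta_1+\sum_{\tau\ge3}\zeta_\tau\beta_\tau$ without checking $\zeta_1\ge0$). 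Minor transcription slip: the middle term of your displayed inequality from Lemma~\ref{2userlem2} should read $\tilde H(\mu_\tau,\tfrac{1}{\tau})$, not $\tilde H(\mu_\tau,\tfrac{1}{2})$.
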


\begin{proof}
For all $\tau\in\{3,...,n\}$, there exists $\beta_{\tau}\in[0,1]$, such that
\begin{equation*}
\beta_{\tau}(\mu_1,1)+(1-\beta_{\tau})(\mu_{\tau},\frac{1}{\tau})=(\mu_2^{\tau},\frac{1}{2}).
\end{equation*}
Clearly, the set $\{(\mu_1,1),(\mu_{2},\frac{1}{2}),(\mu_2^{3},\frac{1}{2}),...,(\mu_2^{n},\frac{1}{2})\}$ can also give the same operating point as $\mathcal{S}$ does. By Lemma \ref{2userlem2},
\begin{equation*}
\beta_{\tau}\tilde{H}(\mu_1,1)+(1-\beta_{\tau})\tilde{H}(\mu_{\tau},\frac{1}{\tau})\le\tilde{H}(\mu_2^{\tau},\frac{1}{2}),\hspace{2mm}\forall\tau\in\{3,...,n\}.
\end{equation*}
Therefore,
\begin{equation}
\label{eq:goodtech}
\begin{aligned}
&\sum_{\tau=1}^n{\pi}_{\tau}\tilde{H}({\mu}_{\tau}, \frac{1}{\tau})\\
&=\zeta_1\tilde{H}(\mu_1,1)+\zeta_2\tilde{H}(\mu_2,\frac{1}{2})+\sum_{\tau=3}^n\zeta_{\tau}(\beta_{\tau}\tilde{H}(\mu_1,1)\\
&~~~+(1-\beta_{\tau})\tilde{H}(\mu_{\tau},\frac{1}{\tau}))\\
&\le\zeta_1\tilde{H}(\mu_1,1)+\zeta_2\tilde{H}(\mu_2,\frac{1}{2})+\sum_{\tau=3}^n\zeta_{\tau}\tilde{H}(\mu_2^{\tau},\frac{1}{2})\\
&\le\zeta_1\tilde{H}(\mu_1,1)+(1-\zeta_1)\tilde{H}(\frac{\zeta_2\mu_2+\sum_{\tau=3}^n\zeta_{\tau}\mu_2^{\tau}}{1-\zeta_1},\frac{1}{2}),
\end{aligned}
\end{equation}
where $\pi_1=\zeta_1+\sum_{\tau=3}^n\zeta_{\tau}\beta_\tau$, $\pi_2=\zeta_2$ and $\pi_{\tau}=\zeta_{\tau}(1-\beta_{\tau})$ for $3\le\tau\le n$, and we have used Lemma \ref{2userlem2} again in the last inequality.
\end{proof}

From Lemmas \ref{2userlem4} and \ref{2userlem5} we have
\begin{equation*}
\begin{aligned}
\displaystyle \frac{1}{n}\log{M}&\leq\alpha\tilde{H}(\gamma_1,1)+(1-\alpha)\tilde{H}(\gamma_2,\frac{1}{2})+\epsilon_n\\
&\leq\max_{\alpha , \gamma_1 , \gamma_2} \alpha\tilde{H}(\gamma_1,1)+(1-\alpha)\tilde{H}(\gamma_2,\frac{1}{2})+\epsilon_n.
\end{aligned}
\end{equation*}
Letting $n\rightarrow\infty$, $\epsilon_n$ goes to zero and we have
\begin{equation*}
\begin{aligned}
\displaystyle &C\le\max_{\alpha , \gamma_1 , \gamma_2} \alpha\tilde{H}(\gamma_1,1)+(1-\alpha)\tilde{H}(\gamma_2,\frac{1}{2})\\
&s.t.~~~
\alpha(\gamma_1+1)+(1-\alpha)(\gamma_2+\frac{1}{2})=1,
\end{aligned}
\label{eq:capacity}
\end{equation*}
where $0\leq\alpha\leq1$ and $0\leq\gamma_1, \gamma_2\leq\frac{1}{2}$.
This completes the proof of the converse part.
\subsection{Achievability}
\label{ssec:ach}
The sequence of steps in our achievability scheme is as follows:

\noindent
\textbf{Generating the codebook:} 
The codebook $\mathcal{C}$ is generated by combining two codebooks $\mathcal{C}_1$ and $\mathcal{C}_2$, which are designed for the cases that the inter-arrival times in the probe stream are 1 and 2, respectively. 
\begin{itemize}
\item Set $\alpha = 0.177-\delta$, for a small and positive value of $\delta$.
\item To generate $\mathcal{C}_1$, fix a binary distribution $P_1$ such that $P_1(1)=0.43$ and $P_1(0)=0.57$. Generate a binary codebook $\mathcal{C}_1$ containing $2^{\alpha nR_1}$ sequences of length $\alpha n$ of i.i.d. entries according to $P_1$.
\item To generate $\mathcal{C}_2$, fix a ternary distribution $P_2$ over set of symbols $\{a_0, a_1, a_2\}$ such that $P_2(a_0)=0.43$, $P_2(a_1)=0.325$ and $P_2(a_2)=0.245$. Generate a ternary codebook $\mathcal{C}_2$ containing $2^{(1-\alpha)nR_2}$ sequences of length $\frac{1}{2}(1-\alpha)n$ of i.i.d. entries according to $P_2$. Substituting $a_0$ with $00$, $a_1$ with $10$, and $a_2$ with $11$, we will have $2^{(1-\alpha)nR_2}$ binary sequences of length $(1-\alpha)n$.
\item Combine $\mathcal{C}_1$ and $\mathcal{C}_2$ to get $\mathcal{C}$, such that $\mathcal{C}$ has $2^{n(\alpha R_1+(1-\alpha)R_2)}$ binary sequences of length $n$ where we concatenate $i$-th row of $\mathcal{C}_1$ with $j$-th row of $\mathcal{C}_2$ to make the $((i-1)(2^{(1-\alpha)nR_2})+j)$-th row of $\mathcal{C}$ (note that $2^{(1-\alpha)nR_2}$ is the number of rows in $\mathcal{C}_2$). Rows of $\mathcal{C}$ are our codewords.
\end{itemize}
In above, $n$ should be chosen such that $\alpha nR_1$, $\alpha n$, $(1-\alpha)nR_2$ and $\frac{1}{2}(1-\alpha)n$ are all integers.

\noindent
\textbf{Encoding:} 
To send message $m$, $U_e^T$ sends the corresponding row of $\mathcal{C}$, that is, he sends the corresponding part of $m$ from $\mathcal{C}_1$ in the first $\alpha n$ time slots and the corresponding part of $m$ from $\mathcal{C}_2$ in the rest of $(1-\alpha)n$ time slots.

\noindent
\textbf{Decoding:} Recall that $U_d^T$ sends a binary length $n$ stream to the scheduler during the same length $n$ time period. Here, $U_d^T$ sends the stream of all ones (one packet in each time slot) in the first $\alpha n$ time slots and sends bit stream of concatenated $10$'s for the rest of $(1-\alpha)n$ time slots.
Assuming the queue is not empty,\footnote{Since in our achievable scheme $U_d^T$'s packets are spaced by either one or two time slots, it is enough to have one packet buffered in the queue, where since we are working in the heavy traffic regime, it will not be a problem.} since there is no  noise in the system, the decoder can always learn the exact sequence sent by $U_e$.

Consequently, we will have:
\begin{equation*}
\begin{aligned}
C&\geq\displaystyle\frac{\log_2{2^{n(\alpha R_1+(1-\alpha)R_2)}}}{n}\\
&=\alpha R_1+(1-\alpha)R_2.
\end{aligned}
\end{equation*}
In infinite block-length regime, where $n\rightarrow\infty$, we can choose $R_1 = H(P_1)$, $R_2 = \frac{1}{2}H(P_2)$ and find codebooks $\mathcal{C}_1$ and $\mathcal{C}_2$ such that this scheme satisfies the rate constraint. Therefore,
\begin{equation*}
C\geq\alpha H(P_1)+(1-\alpha)\frac{1}{2}H(P_2).
\end{equation*}
Substituting the values in the expression above, and letting $\delta$ go to zero, we see that the rate $0.8114$ bits per time slot is achievable.

\section{Three-user Case}
\label{3user}

\begin{figure}[t]
\centering
\includegraphics[scale=0.395]{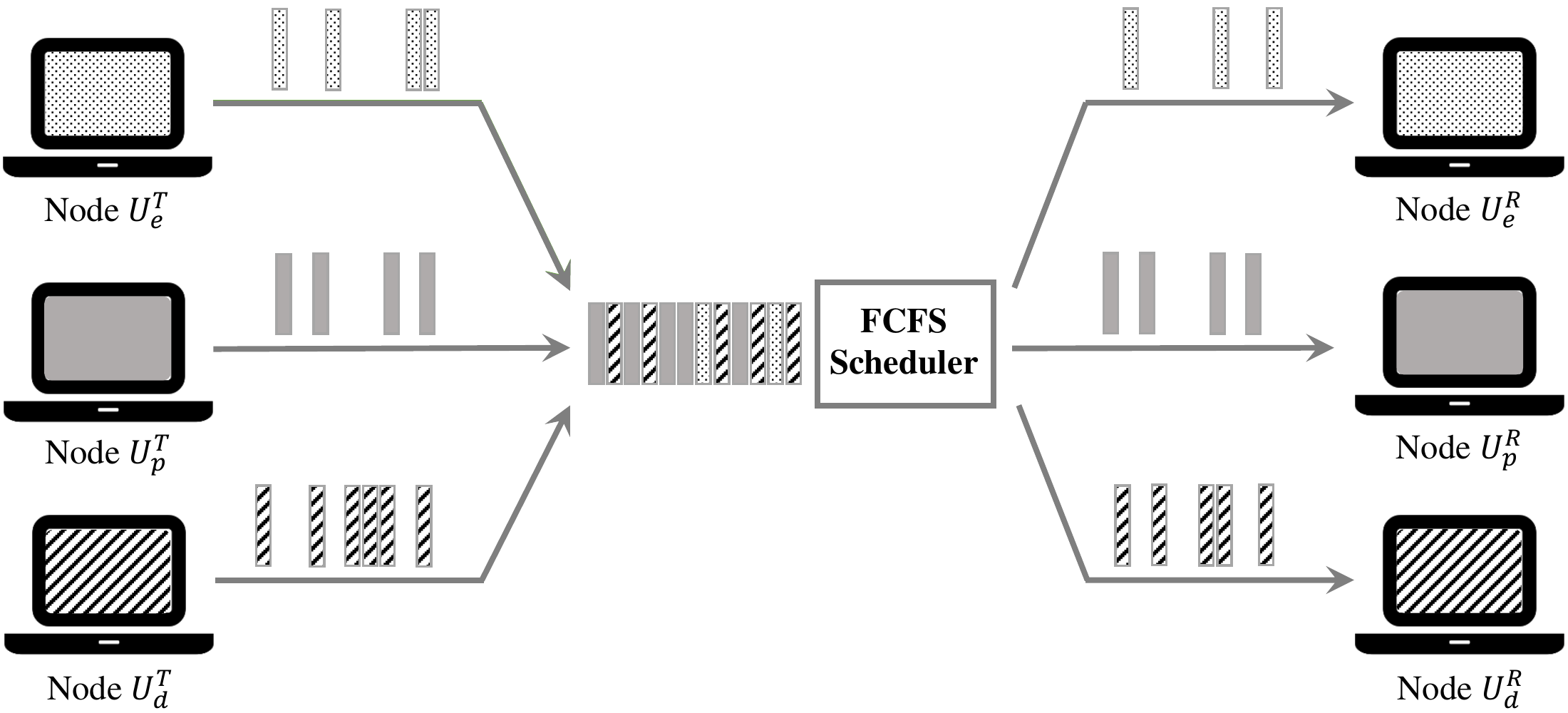}
\caption{Covert queueing channel in a system with three users.\vspace{-0mm}}
\label{fig:3usersysmod}
\end{figure}

As an extension to the basic problem, in this section we consider the case that a third user is also using the shared scheduler. We add user $U_p$ to our basic system model. This user has nodes $U_p^T$ and $U_p^R$ as his transmitter and receiver nodes, respectively (Figure \ref{fig:3usersysmod}). We assume that node $U_p^T$ sends packets according to a Bernoulli process with rate $r_p$ to the shared scheduler. The shared scheduler is again assumed to be FCFS with service rate 1 and we analyze the capacity for coding schemes satisfying queueing stability condition in the asymptotic regime where the operating point is converging to the line $r_e+r_p+r_d=1$. Also, in this section we consider the extra assumption that the inter-arrival time of the packets in the probe stream is upper bounded by the value $\tau_{max}$.

Assuming that a sufficient number of packets are buffered in the shared queue, user $U_d$ can still count the number of packets sent by the other two users between any of his own consecutive packets, yet he cannot distinguish between packets sent by user $U_e$ and the packets sent by user $U_p$. Hence, user $U_d$ has uncertainty in estimating the values of $X$. We model this uncertainty as a noise in receiving $X$. Suppose $U_d^T$ sends two packets with $\tau_i=2$. Each of the other users can possibly send at most 2 packets in the interval $[A_i,A_{i+1})$ and hence, $Y\in\{0,1,2,3,4\}$. Therefore, we have the channel shown in Figure \ref{fig:3userch} for this example.
\begin{figure}[t]
\centering
\includegraphics[scale=0.4]{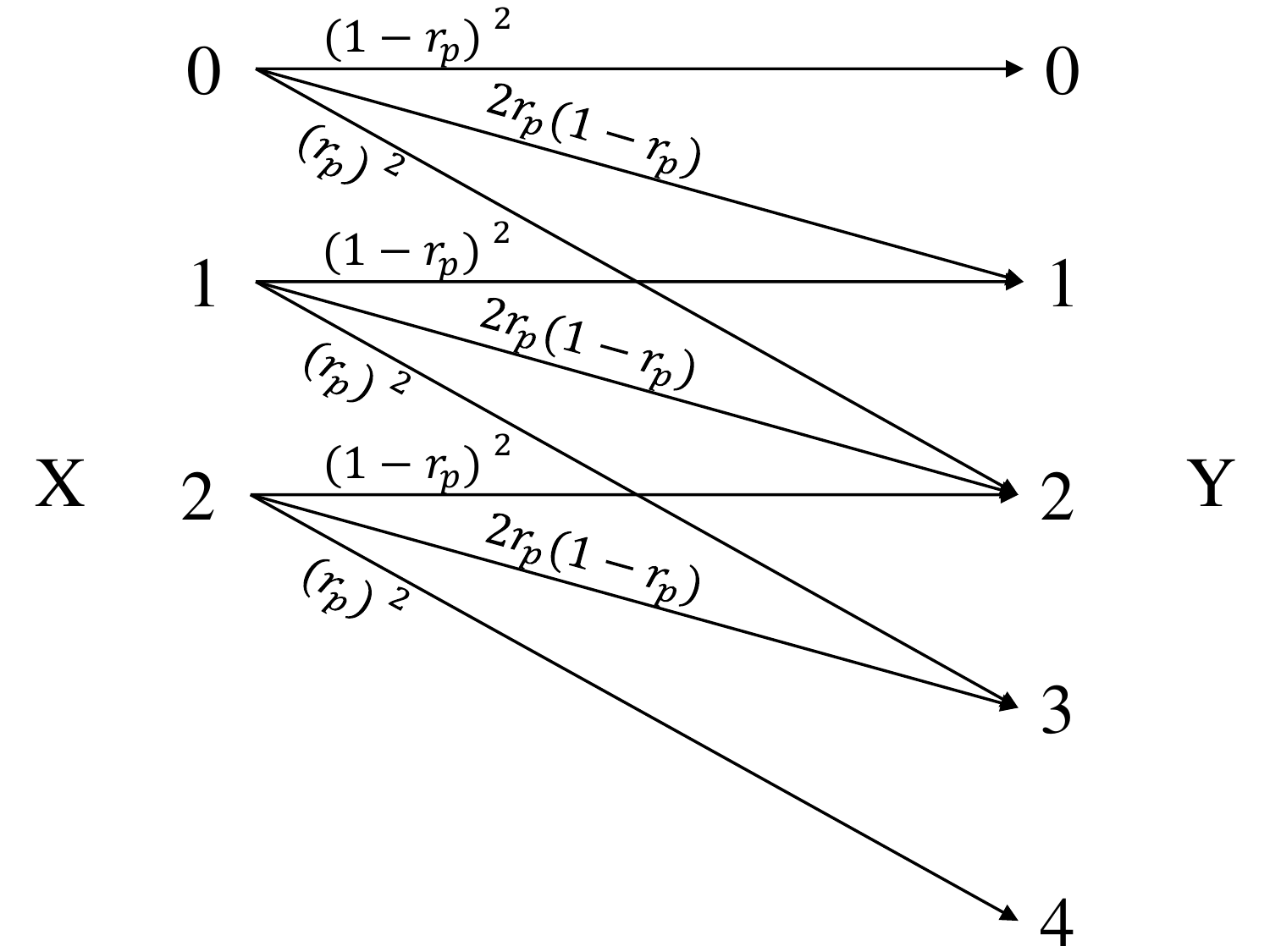}
\caption{Channel between the encoder and the decoder of the system for the case that the inter-arrival time of two packets of the probe stream is 2.\vspace{-0mm}}
\label{fig:3userch}
\end{figure}
In the general case, for the inter-arrival time $\tau$, given $X=x$, we have $Y\in\{x+0,...,x+\tau\}$ such that
\begin{equation}
\label{eq:yisbin}
Pr(Y=i+x|X=x)={\tau \choose i}(r_p)^i(1-r_p)^{\tau-i},\hspace{2mm}i\in\{0,...,\tau\},
\end{equation}
which is a binomial distribution $\textit{Bin}(\tau,r_p)$.
Therefore, the support of the random variable $Y$ is $\{0,1,...,2\tau\}$. For the mean of $Y$, we have
\begin{equation}
\mathbb{E}[Y|\tau]=\mathbb{E}[\mathbb{E}[Y|X,\tau]|\tau]=\mathbb{E}[X+\tau r_p|\tau]=\tau(r_e+r_p).
\end{equation}

Because of user $U_p$'s stream, the encoder is not aware of the stream received at node $U_e^R$ beforehand and this output can provide information to the encoder about $U_p^T$'s stream. The more packets node $U_e^T$ sends to the scheduler, the more information this stream contains about $U_p^T$'s stream. Using this information, the encoder can have an estimation of the output of the channel at the decoder's side and hence, it could be considered as a noisy feedback to the encoder. Figure \ref{fig:3userprob} shows the graphical model for random variables in our system.
\begin{figure}[t]
\centering
\includegraphics[scale=0.5]{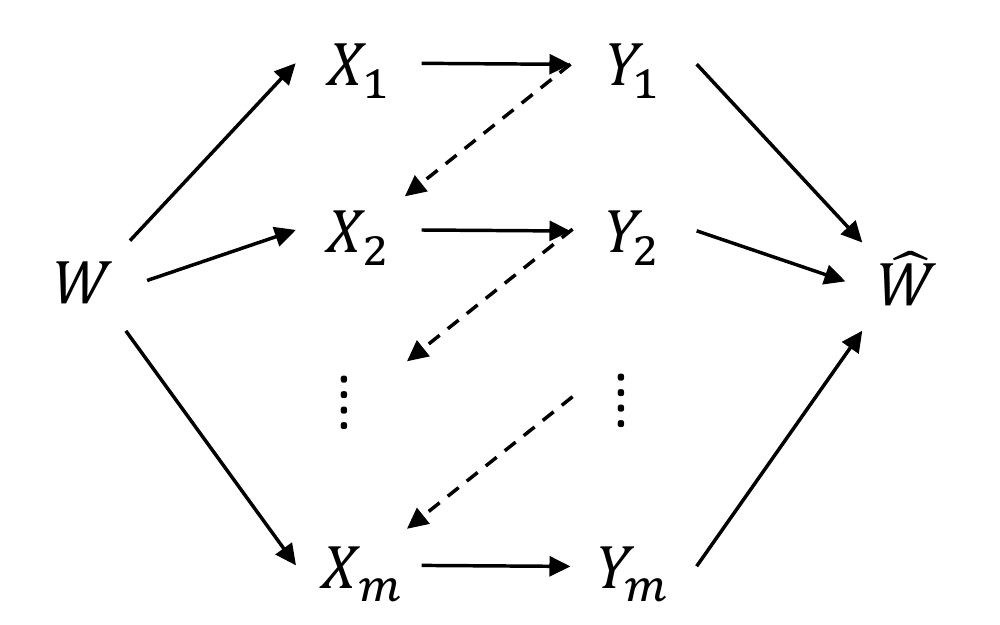}
\caption{The graphical model representing the statistical relation between $W$, $X^m$, $Y^m$ and $\hat{W}$.}
\label{fig:3userprob}
\end{figure}

The main result of this section is evaluation of the capacity of the introduced channel, presented in Theorem \ref{3userthm}. In the following, the subscript $r_p$ denotes that the calculation is done when the rate of $U_p$ is $r_p$.

%3userTHEOREM
\begin{theorem}
\label{3userthm}
If the rate of $U_p$ is $r_p$, the capacity of the timing channel in a shared FCFS scheduler with service rate 1 depicted in Figure \ref{fig:3usersysmod} is given by
\begin{equation}
\begin{aligned}
\displaystyle &C(r_p)=\max_{\alpha , \gamma_1 , \gamma_2,\tau} \alpha\tilde{I}_{r_p}(\gamma_1,\frac{1}{\tau})+(1-\alpha)\tilde{I}_{r_p}(\gamma_2,\frac{1}{\tau+1})\\
&s.t.~~~
\alpha(\gamma_1+\frac{1}{\tau})+(1-\alpha)(\gamma_2+\frac{1}{\tau+1})=1-r_p,
\end{aligned}
\label{eq:3capacit}
\end{equation}
where $0\leq\alpha\leq1$ and $0\leq\gamma_1, \gamma_2\le1$ and $1\le\tau\le\tau_{max}-1$. The function $\tilde{I}_{r_p}:[0,1]\times\{\frac{1}{k}:k\in \mathbb{N}\}\mapsto[0,1]$ is defined as
\begin{equation}
\displaystyle\tilde{I}_{r_p}(\gamma,\frac{1}{k})=\frac{1}{k}\max_{\substack{X\in\{0,1,...,k\}\\ \mathbb{E}[X]=k\gamma}}I_{r_p}(X;Y),\hspace{5mm}k\in\mathbb{N}, 0\leq\gamma\leq1.
\label{eq:i}
\end{equation}
%\textcolor{red}{where the subscript $r_p$ indicates that the mutual information is calculated when the rate of $U_p$ is $r_p$.}
\end{theorem}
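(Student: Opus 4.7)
The plan is to mirror the structure of the two-user proof, upgrading the object being maximized from an entropy $\tilde H$ to a mutual information $\tilde I_{r_p}$, since the presence of $U_p$ converts the noiseless observation $Y=X$ into the noisy observation $Y=X+Z$ with $Z\sim\mathrm{Bin}(\tau,r_p)$ independent of $X$ given $\tau$, as already noted in (\ref{eq:yisbin}). I would first establish the analogs of Lemmas 1--3 for $\tilde I_{r_p}$: most importantly a joint concavity statement asserting that whenever $(\gamma_2,1/k_2)=\alpha(\gamma_1,1/k_1)+(1-\alpha)(\gamma_3,1/k_3)$ one has $\alpha\tilde I_{r_p}(\gamma_1,1/k_1)+(1-\alpha)\tilde I_{r_p}(\gamma_3,1/k_3)\le \tilde I_{r_p}(\gamma_2,1/k_2)$. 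The argument proceeds by a time-sharing/embedding of the two constituent optimizers into an input distribution on $\{0,\dots,k_2\}$ satisfying the mean constraint, followed by the standard concavity of $I(X;Y)$ in $P_X$ for a fixed channel and the normalization $k_2=\alpha k_1+(1-\alpha)k_3$.

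For the converse I would start as in the proof of Lemma 4: from the Markov chain $W\to X^m\to Y^m\to\hat W$, Fano's inequality together with conditioning on $\tau^m$ yields
\begin{equation*}
\tfrac{1}{n}\log M\le \tfrac{1}{n}I(X^m;Y^m\mid \tau^m)+\epsilon_n\le \tfrac{1}{n}\sum_{j=1}^m I(X_j;Y_j\mid\tau^m)+\epsilon_n,
\end{equation*}
because, given $\tau^m$, the channel $X^m\to Y^m$ is memoryless (the noise counts $Z_j\sim\mathrm{Bin}(\tau_j,r_p)$ are independent across disjoint intervals), so the usual single-letterization for memoryless channels with feedback goes through. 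Maximizing each summand over $P_{X_j\mid\tau^m}$ with $\mathbb E[X_j\mid\tau^m]=\tau_j\xi_j$ gives $\tau_j\tilde I_{r_p}(\xi_j,1/\tau_j)$. Grouping terms by the value of the inter-arrival time $\tau\in\{1,\dots,\tau_{\max}\}$, invoking Jensen together with concavity of $\tilde I_{r_p}$ in its first argument, and using the stability-enforced operating-line $\sum_\tau\pi_\tau(\mu_\tau+1/\tau)=1-r_p$, collapses the bound into $\sum_{\tau=1}^{\tau_{\max}}\pi_\tau\tilde I_{r_p}(\mu_\tau,1/\tau)$. Finally, repeated application of the joint concavity lemma folds any mixture over non-adjacent values of $1/\tau$ into a two-point mixture over adjacent $1/\tau$ and $1/(\tau+1)$ for a suitable $\tau$, yielding the two-term formula (\ref{eq:3capacit}); the optimal $\tau$ is then selected by an outer maximization.

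For achievability I would build a composite codebook from two constituent codebooks $\mathcal C_\tau$ and $\mathcal C_{\tau+1}$ designed for inter-arrival lengths $\tau$ and $\tau+1$, with time-sharing parameter $\alpha$, using the tilted input distributions guaranteed by Lemma 1 (extended to the mutual-information setting). During the first $\alpha n$ slots $U_d^T$ spaces its probe packets by $\tau$, and during the remaining $(1-\alpha)n$ slots it spaces them by $\tau+1$; in each regime $U_e^T$ draws its per-interval counts $X$ i.i.d.\ from the corresponding maximizer of $\tilde I_{r_p}(\gamma_i,1/(\tau+i-1))$ and arbitrarily places that many packets inside the interval. Standard random coding and joint typicality decoding over the memoryless channel $X\to Y$ then achieves any rate strictly below $\alpha\tilde I_{r_p}(\gamma_1,1/\tau)+(1-\alpha)\tilde I_{r_p}(\gamma_2,1/(\tau+1))$ with vanishing error, which matches the converse as $\delta\to 0$.

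The hardest step will be the joint concavity of $\tilde I_{r_p}$: unlike $\tilde H$, the functional being maximized is a mutual information whose channel depends on the support size $k$ (since the noise is $\mathrm{Bin}(k,r_p)$), so the naive mixture of optimizers lives over a channel that differs from the target channel, and one must carefully couple inputs across different $k$'s while preserving the mean. A related subtlety concerns the noisy feedback through $U_e^R$; one must argue, as for any memoryless channel with feedback, that feedback does not enlarge the Shannon capacity, so the single-letter bound remains tight. Lastly, one must preserve the queue-length bookkeeping needed for the identification $Y_j=D_{j+1}-D_j-1$ in the presence of $U_p$'s additional stochastic traffic, which I would handle as in the two-user case by padding the probe with extra packets whenever the queue threatens to empty, a maneuver whose rate cost is absorbed in the asymptotic operating point $r_e+r_p+r_d\to1$.
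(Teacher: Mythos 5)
Your macro-architecture matches the paper exactly: you mirror the two-user argument, replacing $\tilde H$ by $\tilde I_{r_p}$ with the binomial noise $Z\sim\mathrm{Bin}(\tau,r_p)$ in each inter-arrival interval, single-letterize using Fano and DPI, invoke concavity of $\tilde I_{r_p}$ in its first argument to collapse same-$\tau$ terms via Jensen, use a joint-concavity lemma to fold the mixture over different $\tau$'s into a two-term mixture over adjacent $\tau^*,\tau^*+1$, and achieve the bound by time-sharing two random codebooks. That is precisely the route the paper takes. However, two points need attention.

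First, the converse as you display it goes through the intermediate bound $\frac{1}{n}I(X^m;Y^m\mid\tau^m)\le\frac{1}{n}\sum_{j}I(X_j;Y_j\mid\tau^m)$, which you justify by memorylessness of the per-interval channel. With the noisy feedback through $U_e^R$ that you yourself flag, $X_{j+1}$ depends on $Y_j$ (via the encoder's observation of its own departures), so $X_{j+1}^m$ is not independent of the noise $Z_j$ given $X_j$; the subadditivity $I(X^m;Y^m)\le\sum_j I(X_j;Y_j)$ can fail in exactly this feedback regime. The correct route, which the paper follows and which ``single-letterization with feedback'' actually refers to, bounds $I(W;Y^m\mid\tau^m)$ by the chain rule $\sum_j I(W;Y_j\mid Y^{j-1},\tau^m)\le\sum_j I(W,Y^{j-1};Y_j\mid\tau^m)$ and then applies the Markov chain $(W,Y^{j-1})\to X_j\to Y_j$ (Figure \ref{fig:3userprob}) to each term. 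You should route through $W$, not through $X^m$.

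Second, and more importantly, you correctly identify the joint concavity of $\tilde I_{r_p}$ as the hard step and correctly diagnose why the naive argument fails (the noise law, hence the channel, varies with $k$), but you do not resolve it. Your suggested ``time-sharing/embedding of the two constituent optimizers into an input on $\{0,\dots,k_2\}$ plus concavity of $I(X;Y)$ in $P_X$ for a fixed channel'' is exactly the argument that breaks: there is no single fixed channel common to $k_1$, $k_2$, $k_3$ over which to push concavity in $P_X$. The paper instead proves (i) concavity in $\gamma$ at fixed $k$ analytically via concavity of entropy applied to $P_X\ast\mathrm{Bin}(k,r_p)$, and (ii) reduces the across-$k$ concavity to a finite family of three-point inequalities at adjacent $k-1,k,k+1$ with the specific weight $\alpha=(k-1)/(2k)$, observes that the left-hand side of \eqref{eq:concproofl} is Lipschitz in $(\gamma_1,\gamma_3,r_p)$ with $k$ ranging over finitely many values up to $\tau_{\max}$, and verifies the inequality numerically. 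Without this reduction-plus-numerical-check (or a genuinely new analytic argument coupling the different binomial channels), your proof of the key lemma is incomplete, and consequently the collapse to the two-term formula in \eqref{eq:3capacit} is not established.
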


The proof is based on converse and achievability arguments. Before giving the proof, we first investigate some of the properties of the function $\tilde{I}$.

\begin{figure*}[t]
\centering
\begin{subfigure}{.33\textwidth}
  \centering
  \includegraphics[scale=0.28]{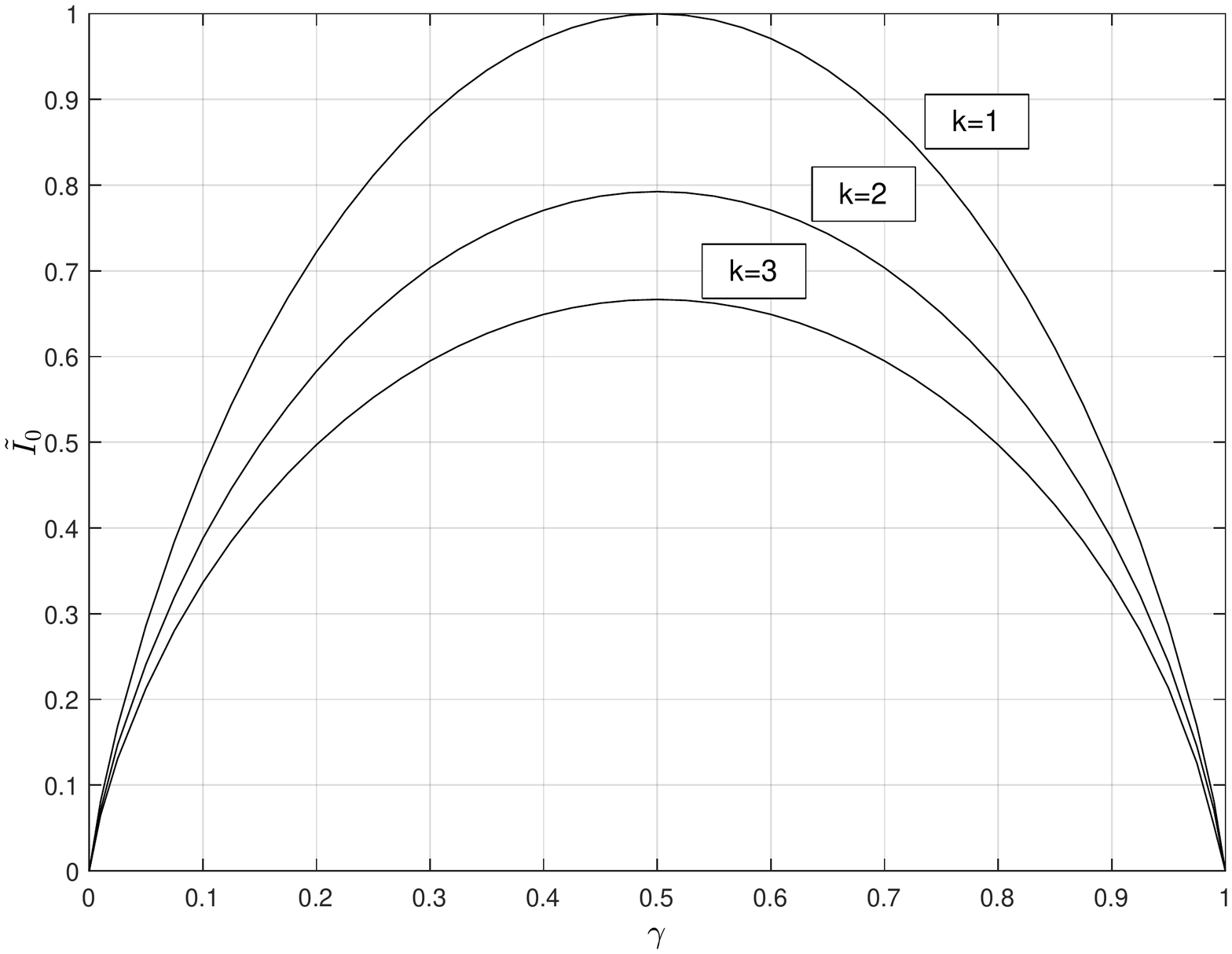}
\end{subfigure}%
\begin{subfigure}{.33\textwidth}
  \centering
  \includegraphics[scale=0.28]{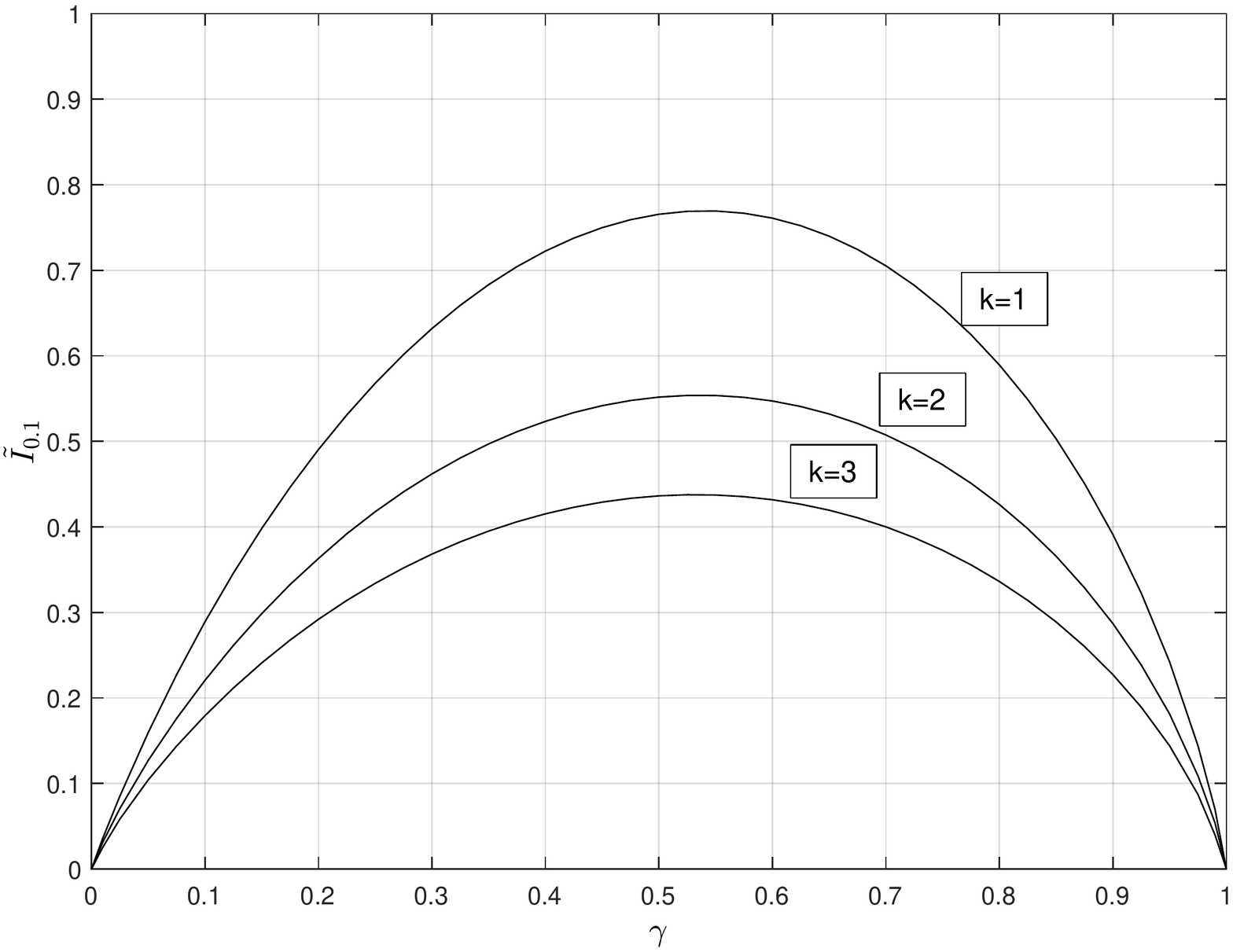}
\end{subfigure}
\begin{subfigure}{.33\textwidth}
  \centering
  \includegraphics[scale=0.28]{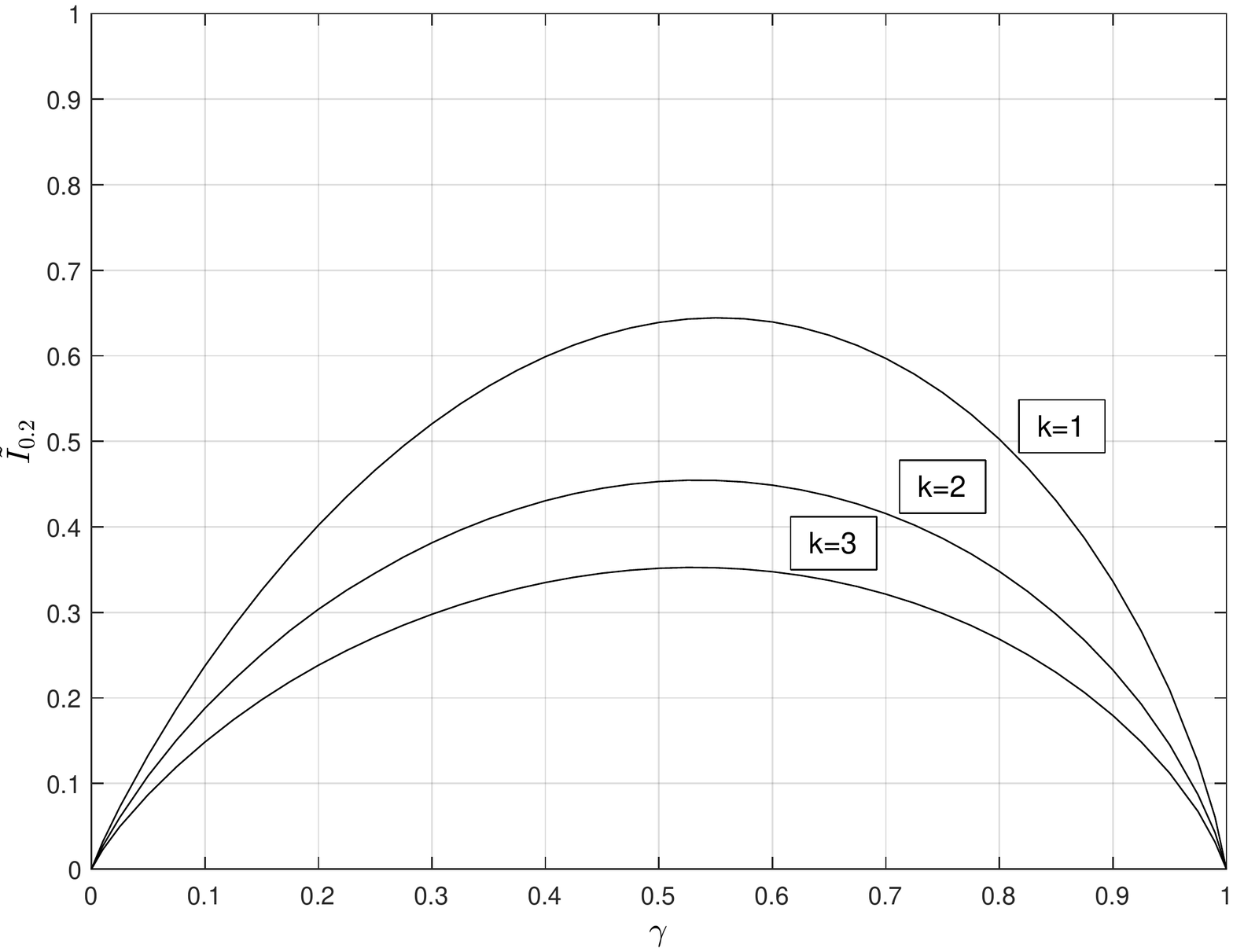}
\end{subfigure}
\caption{$\tilde{I}_{r_p}(\gamma,\frac{1}{k})$ for different values of $\gamma$ and $k\in\{1,2,3\}$ and $r_p\in\{0,0.1,0.2\}$.}
\label{fig:Itilde}
\end{figure*}

%3userLEMMA1
\begin{lemma}
\label{3userlem1}
The function $\tilde{I}_{r_p}$ could be computed using the following expression.
\begin{equation}
\displaystyle\tilde{I}_{r_p}(\gamma,\frac{1}{k})=\frac{1}{k}\check{H}_{r_p}(\gamma,\frac{1}{k})-\frac{1}{k}H(\textit{Bin}(k,r_p)),
\label{eq:I_rf}
\end{equation}
where $\displaystyle\check{H}_{r_p}(\gamma,\frac{1}{k})=\max_{\substack{X\in\{0,1,...,k\}\\ \mathbb{E}[X]=k\gamma}}H_{r_p}(Y)$ and the second term is the entropy of the binomial distribution with parameters $k$ and $r_p$.
\end{lemma}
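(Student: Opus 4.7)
The plan is to exploit the additive‑noise structure of the channel described by equation (\ref{eq:yisbin}). Conditioned on the inter‑arrival time $\tau=k$ and on $X=x$, the output satisfies $Y=x+Z$ where $Z$ has conditional law $P(Z=i\mid X=x)=\binom{k}{i}r_p^{i}(1-r_p)^{k-i}$ for $i\in\{0,\dots,k\}$. Crucially this law does not depend on $x$, so $Z\sim\textit{Bin}(k,r_p)$ is independent of $X$ and $Y$ is the sum of $X$ with an independent binomial noise.

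From this I would first compute the conditional entropy: for every $x\in\{0,\dots,k\}$,
\begin{equation*}
H(Y\mid X=x)=H(x+Z)=H(Z)=H(\textit{Bin}(k,r_p)),
\end{equation*}
since shifting a discrete random variable by a constant does not change its entropy. Averaging over $X$ gives $H(Y\mid X)=H(\textit{Bin}(k,r_p))$, which is a constant that does not depend on the choice of $P_X$.

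Next I would decompose the mutual information as $I_{r_p}(X;Y)=H(Y)-H(Y\mid X)=H(Y)-H(\textit{Bin}(k,r_p))$. Substituting this into the definition (\ref{eq:i}) and pulling the constant $H(\textit{Bin}(k,r_p))$ out of the maximisation yields
\begin{equation*}
\tilde{I}_{r_p}\!\Bigl(\gamma,\tfrac{1}{k}\Bigr)=\frac{1}{k}\Bigl[\,\max_{\substack{X\in\{0,\dots,k\}\\ \mathbb{E}[X]=k\gamma}}H(Y)\,-\,H(\textit{Bin}(k,r_p))\Bigr]=\frac{1}{k}\check{H}_{r_p}\!\Bigl(\gamma,\tfrac{1}{k}\Bigr)-\frac{1}{k}H(\textit{Bin}(k,r_p)),
\end{equation*}
which is exactly (\ref{eq:I_rf}). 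There is no real obstacle here: the content of the lemma is just that $Y-X$ is independent of $X$, so the additive noise contributes a fixed $H(\textit{Bin}(k,r_p))$ to the conditional entropy and the mutual information inherits the maximisation problem from $H(Y)$. The only point worth being careful about is verifying that (\ref{eq:yisbin}) genuinely gives the same noise distribution for every value of $x$; this is immediate from the binomial formula there, which depends on $i$ but not on $x$.
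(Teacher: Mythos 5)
Your proof is correct and follows essentially the same route as the paper's Appendix E: both expand $I_{r_p}(X;Y)=H(Y)-H(Y\mid X)$, use \eqref{eq:yisbin} to see that $H(Y\mid X=x)=H(\textit{Bin}(k,r_p))$ for every $x$, and then pull the resulting constant out of the maximisation. Your additive-noise framing $Y=X+Z$ with $Z$ independent of $X$ is a slightly more explicit way of justifying that step, but it is the same argument.
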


\noindent
See Appendix \ref{app:E} for the proof of Lemma \ref{3userlem1}.

In order to calculate $\check{H}_{r_p}(\gamma,\frac{1}{k})$, the following optimization problem should be solved.
\begin{equation}
\begin{aligned}
&\displaystyle\max_{P_X\geq 0}\log_2e\sum_{i=0}^{2k}P_{Y}(i)\ln(\frac{1}{P_{Y}(i)})\\
&\hspace{15mm}\text{s.t.}\hspace{5mm} \left\{
\begin{array}{l l}
\sum_{i=0}^{k}iP_{X}(i)=\mathbb{E}[X]=k\gamma,\\
\sum_{i=0}^{k}P_{X}(i)=1,
\end{array} \right.
\end{aligned}
\label{eq:opti}
\end{equation}
where $P_Y=P_X*P_{\textit{Bin}(k,r_p)}$, that is,
\begin{equation}
\begin{aligned}
& P_{Y}(i)= \sum_{j=0}^{k}P_X(j)P_{\textit{Bin}(k,r_p)}(i-j),   \hspace{3mm}i\in\{0,1,...,2k\}.
\end{aligned}
\end{equation}

\noindent
Figure \ref{fig:Itilde} shows the functions $\tilde{I}_{0}(\gamma,\frac{1}{k})$, $\tilde{I}_{0.1}(\gamma,\frac{1}{k})$ and $\tilde{I}_{0.2}(\gamma,\frac{1}{k})$ for different values of $\gamma$ and $k\in\{1,2,3\}$.
%\begin{figure}[t]
%\centering
%\includegraphics[scale=0.3]{3user/Itilde.pdf}
%\caption{$\tilde{I}_{r_p}(\gamma,\frac{1}{k})$ for different values of $\gamma$ and $k\in\{1,2,3\}$ and $r_p\in\{0,0.1,0.2\}$.}
%\label{fig:Itilde}
%\end{figure}

%3userLEMMA2
\begin{lemma}
\label{3userlem2}
For all $0\le\ r_p\le1$, integers $1\le k_1, k_2, k_3\le\tau_{max}$, values $0\le\gamma_1,\gamma_2,\gamma_3\le1$, and $\alpha\in[0,1]$, such that $\displaystyle\alpha(\gamma_1,\frac{1}{k_1})+(1-\alpha)(\gamma_3,\frac{1}{k_3})=(\gamma_2,\frac{1}{k_2})$, we have
\begin{equation}
\alpha\tilde{I}_{r_p}(\gamma_1,\frac{1}{k_1})+(1-\alpha)\tilde{I}_{r_p}(\gamma_3,\frac{1}{k_3})\le\tilde{I}_{r_p}(\gamma_2,\frac{1}{k_2}).
\label{eq:toolem}
\end{equation}
\end{lemma}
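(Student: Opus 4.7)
The plan is to follow the blueprint of the proof of Lemma~\ref{2userlem2} (concavity of $\tilde H$), using the decomposition in Lemma~\ref{3userlem1} to isolate the binomial noise contribution. Setting $\beta_1=\alpha k_2/k_1$ and $\beta_3=(1-\alpha)k_2/k_3$, the hypothesis $\alpha(\gamma_1,1/k_1)+(1-\alpha)(\gamma_3,1/k_3)=(\gamma_2,1/k_2)$ yields $\beta_1+\beta_3=1$ and $\beta_1 k_1\gamma_1+\beta_3 k_3\gamma_3=k_2\gamma_2$. Multiplying the desired inequality by $k_2$ and substituting the identity from Lemma~\ref{3userlem1}, the claim is equivalent to
\begin{equation*}
\begin{aligned}
&\beta_1\check H_{r_p}(\gamma_1,\tfrac{1}{k_1})+\beta_3\check H_{r_p}(\gamma_3,\tfrac{1}{k_3})-\check H_{r_p}(\gamma_2,\tfrac{1}{k_2}) \\
&\quad\le\beta_1 H(\textit{Bin}(k_1,r_p))+\beta_3 H(\textit{Bin}(k_3,r_p))-H(\textit{Bin}(k_2,r_p)).
\end{aligned}
\end{equation*}
I would then prove the sharper pair that the left-hand side is non-positive (a concavity of $\check H_{r_p}$) and the right-hand side is non-negative (a convexity property of binomial entropy).

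For the $\check H_{r_p}$ piece, my plan is to mimic the proof of Lemma~\ref{2userlem2}. Let $P^*_j$ achieve $\check H_{r_p}(\gamma_j,1/k_j)$ for $j\in\{1,3\}$. Over a horizon of $N$ inter-arrival intervals whose types are drawn with chunk-frequencies $\beta_1,\beta_3$ (totalling $\approx Nk_2$ time slots), the time-shared per-interval output entropy equals $\beta_1\check H_{r_p}(\gamma_1,1/k_1)+\beta_3\check H_{r_p}(\gamma_3,1/k_3)$ by independence across intervals. I would then construct an input distribution supported on $\{0,\ldots,k_2\}$ with mean $k_2\gamma_2$ whose per-interval output entropy matches or exceeds this time-shared value, exploiting that $\textit{Bin}(k,r_p)=\sum_{i=1}^k Z_i$ with $Z_i$ i.i.d.\ Bernoulli$(r_p)$. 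The i.i.d.\ structure of the noise admits a coupling between the binomial noises at the three scales $k_1,k_2,k_3$, so that a block code operating with inter-arrival $k_2$ over $\mathrm{lcm}(k_1,k_3)$ time slots can reproduce the time-shared output statistics from a common underlying Bernoulli randomness.

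For the binomial-entropy piece, the right-hand side is non-negative when $H(\textit{Bin}(\cdot,r_p))$ satisfies the convex-combination inequality along the operating-point direction $1/k_2=\beta_1/k_1+\beta_3/k_3$. My plan is to first establish that $k\mapsto H(\textit{Bin}(k,r_p))$ is concave and increasing (the entropy increments $H(\textit{Bin}(k{+}1,r_p))-H(\textit{Bin}(k,r_p))$ are positive and decreasing in $k$, a standard consequence of the log-concavity of the binomial p.m.f.), and then carry out a short curvature check after the change of variables $u=1/k$ to conclude that $u\mapsto H(\textit{Bin}(1/u,r_p))$ is convex on $\{1/k:k\in\mathbb{N}\}$.

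The main obstacle is the concavity of $\check H_{r_p}$: when $k_1<k_2<k_3$, the supports $\{0,\ldots,k_j\}$ of $P^*_1$ and $P^*_3$ do not both embed into $\{0,\ldots,k_2\}$, so a direct mixture of these input distributions is not a valid candidate for $(\gamma_2,1/k_2)$. Overcoming this requires the block-level coupling sketched above, which is the direct noisy-channel analog of the argument for Lemma~\ref{2userlem2} given in Appendix~\ref{app:C}; executing it rigorously demands careful bookkeeping of the Bernoulli $U_p$-realizations across inter-arrival scales, and this is where the bulk of the detailed work will lie.
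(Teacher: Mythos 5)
Your decomposition is algebraically correct: with $\beta_1=\alpha k_2/k_1$ and $\beta_3=(1-\alpha)k_2/k_3$ one has $\beta_1+\beta_3=1$ and $\beta_1 k_1\gamma_1+\beta_3 k_3\gamma_3=k_2\gamma_2$, and multiplying \eqref{eq:toolem} by $k_2$ and invoking Lemma~\ref{3userlem1} gives exactly the two-sided inequality you display. The fatal problem is the binomial-entropy piece. Note that $\beta_1 k_1+\beta_3 k_3=\alpha k_2+(1-\alpha)k_2=k_2$, i.e.\ $k_2$ is the $\beta$-weighted \emph{arithmetic} mean of $k_1,k_3$. Since $k\mapsto H(\textit{Bin}(k,r_p))$ is concave (as you yourself note), Jensen gives $\beta_1 H(\textit{Bin}(k_1,r_p))+\beta_3 H(\textit{Bin}(k_3,r_p))\le H(\textit{Bin}(k_2,r_p))$, so the right-hand side of your reduced inequality is \emph{non-positive}, not non-negative. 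Your assertion ``$1/k_2=\beta_1/k_1+\beta_3/k_3$'' is also false: the harmonic relation $1/k_2=\alpha/k_1+(1-\alpha)/k_3$ holds with the $\alpha$ weights, not with the $\beta$ weights, so the convexity-in-$u=1/k$ route does not produce what you need either. A quick numerical check confirms the sign: for $k_1=1,\,k_2=2,\,k_3=3,\,r_p=\tfrac12$, one finds $\beta_1=\beta_3=\tfrac12$ and $\tfrac12\cdot 1+\tfrac12\cdot 1.811-1.5\approx-0.094<0$.

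Consequently, the plan to show both pieces have favorable signs cannot succeed: both the $\check H_{r_p}$ term and the binomial term in your split are non-positive, and Lemma~\ref{3userlem2} holds only because the former's deficit dominates the latter's. Establishing that dominance is precisely what requires treating the two terms \emph{jointly}, and your block-coupling sketch for the $\check H_{r_p}$ piece alone --- which, as you acknowledge, you do not carry out --- cannot address it. The paper's Appendix~\ref{app:F} deliberately sidesteps this: it proves concavity of $\tilde I_{r_p}$ in $\gamma$ for fixed $k$ analytically (via concavity of entropy under mixtures of the output laws $P_X\ast\textit{Bin}(k,r_p)$), observes that the full pair-concavity reduces to checking the midpoint inequalities \eqref{eq:concproof} only for consecutive $(k-1,k,k+1)$ with $2\le k\le\tau_{max}-1$, combines the $\check H_{r_p}$ and binomial contributions into the single expression \eqref{eq:concproofl}, and then verifies that Lipschitz-continuous inequality numerically over the finitely many relevant $k$. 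If you want an analytic argument you must abandon the sign-splitting and directly bound the difference of the two deficits; otherwise you should follow the paper's finitary numerical reduction.
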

\noindent
See Appendix \ref{app:F} for the proof of Lemma \ref{3userlem2}.

Using the mentioned properties, the capacity of the timing channel in the shared FCFS scheduler of Figure \ref{fig:3usersysmod} for different values of $r_p$ can be calculated. Figure \ref{fig:3userC} shows the value of the capacity with respect to $r_p$.

\begin{figure}[t]
\centering
\includegraphics[scale=0.32]{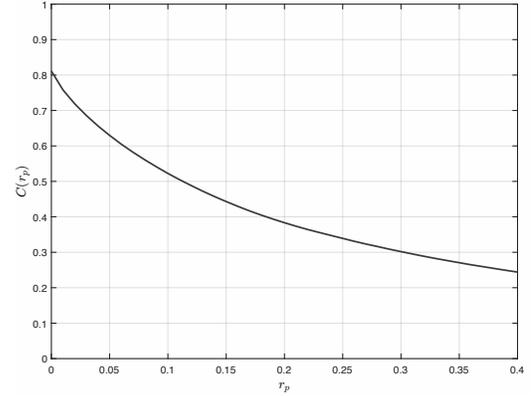}
\caption{capacity of the timing channel in the shared FCFS scheduler of Figure \ref{fig:3usersysmod} for different values of $r_p$.\vspace{-0mm}}
\label{fig:3userC}
\end{figure}

The following proof of Theorem \ref{3userthm} is based on converse and achievability arguments.
\subsection{Converse}
For the converse part, similar to the approach in Section \ref{2user}, first we find an upper bound on the information rate, which consists of a weighted summation of possible maximum information rates for different inter-arrival times of the packets in the probe stream, which satisfies the stability constraint; and then we further bound it with a summation which only corresponds to two inter-arrival times.

Suppose the rate of $U_p$ is $r_p$. Similar to the proof of Lemma \ref{2userlem4}, for the timing channel in a shared FCFS scheduler with service rate 1 depicted in Figure \ref{fig:3usersysmod}, any code consisting of a codebook of $M$ equiprobable binary codewords, where messages take on average $n$ time slots to be received, satisfies
\begin{equation*}
\begin{aligned}
\displaystyle\frac{1}{n}\log{M}&\le\frac{1}{n}I_{r_p}(W;\hat{W}|\tau^m)+\epsilon_n\\
&\overset{(a)}{\le}\frac{1}{n}I_{r_p}(W;Y^m|\tau^m)+\epsilon_n,
\end{aligned}
\end{equation*}
where $\epsilon_n=\frac{1}{n}(H(P_e)+P_e\log_2{(M-1)})$ and $(a)$ follows from data processing inequality in the model in Figure \ref{fig:3userprob}. Therefore,
\begin{equation*}
\begin{aligned}
\displaystyle\frac{1}{n}\log{M}&\leq\frac{1}{n}\sum_{j=1}^mI_{r_p}(W;Y_j|Y^{j-1}\tau^m)+\epsilon_n\\
&\leq\frac{1}{n}\sum_{j=1}^mI_{r_p}(W,Y^{j-1};Y_j|\tau^m)+\epsilon_n\\
&\overset{(a)}{\le}\frac{1}{n}\sum_{j=1}^mI_{r_p}(X_j;Y_j|\tau^m)+\epsilon_n\\
&\leq\frac{1}{n}\sum_{j=1}^m\max_{P_{X_j|\tau^m}}I_{r_p}(X_j;Y_j|\tau^m)+\epsilon_n,
\end{aligned}
\end{equation*}
where $(a)$ again follows from data processing inequality in the model in Figure \ref{fig:3userprob}. In the maximization above, the mean of the distribution $P_{X_j|\tau^m}$ is $\mathbb{E}[X_j|\tau^m]$ and in order to find the maximum information rate, the set of means, $\{\mathbb{E}[X_1|\tau^m],\mathbb{E}[X_2|\tau^m],...,\mathbb{E}[X_m|\tau^m]\}$, should satisfy the constraint $r_e+r_p+r_d= 1$, that is, $\frac{1}{n}\sum_{j=1}^m\mathbb{E}[X_j|\tau^m]+r_p+r_d= 1$.
Let $\xi_j=\frac{\mathbb{E}[X_j|\tau^m]}{\tau_j}$. Using \eqref{eq:i}, we have
\begin{equation}
\max_{\substack{P_{X_j|\tau^m}\\  \mathbb{E}[X_j|\tau^m]=\tau_j\xi_j}}I_{r_p}(X_j;Y_j|\tau^m)=\tau_j\tilde{I}_{r_p}(\xi_j, \frac{1}{\tau_j}).
\label{eq:3maxi1}
\end{equation}
This implies that
\begin{equation*}
\begin{aligned}
\displaystyle\frac{1}{n}\log{M}&\leq\frac{1}{n}\sum_{j=1}^m\tau_j\tilde{I}(\xi_j, \frac{1}{\tau_j})+\epsilon_n.
\end{aligned}
\end{equation*}

Next, using Lemma \ref{3userlem2} and similar to the proof of Lemma \ref{2userlem4}, by breaking the summation, using Jensen's inequality and the equation $n = \sum_{\tau=1}^{\tau_{max}} \tau\cdot m_{\tau}$, we will have
\begin{equation}
\begin{aligned}
\displaystyle\frac{1}{n}\log{M}&\le\sum_{\tau=1}^{\tau_{max}}\pi_{\tau}\tilde{I}_{r_p}(\mu_{\tau}, \frac{1}{\tau})+\epsilon_n,
\end{aligned}
\label{eq:generalsum3user}
\end{equation}
where $\mu_{\tau}$ is the average of $\xi_j$'s which have $\tau_j=\tau$ and $\pi_{\tau} = (\tau\cdot m_{\tau})/(\sum_{\tau=1}^{\tau_{max}} \tau\cdot m_{\tau})$.
In this expression, $\pi_{\tau}$ could be interpreted as the portion of time that user $U_d$ sends packets with inter-arrival time equal to $\tau$.

Also, using the same approach as the one used in the proof of Lemma \ref{2userlem4}, the constraint of the problem could be written as follows.

\begin{equation}
\displaystyle\sum_{\tau=1}^{\tau_{max}}\pi_{\tau}(\mu_{\tau}+\frac{1}{\tau}) = 1-r_p.
\label{eq:constrain23user}
\end{equation}
Suppose the set of pairs $\mathcal{S}=\{(\mu_{\tau},\frac{1}{\tau}),\tau\in[\tau_{max}]\}$ with weights $\{\pi_{\tau},\tau\in[\tau_{max}]\}$ gives $\sum_{\tau=1}^{\tau_{max}}{\pi}_{\tau}\tilde{I}_{r_p}({\mu}_{\tau}, \frac{1}{\tau})$ and has its operating point on the line $r_e+r_d=1-r_p$, and we have
$
\frac{1}{\tau^*}\leq\sum_{\tau=1}^{\tau_{max}}{\pi}_{\tau}\frac{1}{\tau}\leq\frac{1}{\tau^*+1}
$,
for some $1\le\tau^*\leq\tau_{max}-1$. We have
%\begin{equation*}
% \left\{
%\begin{array}{l l}
%\beta_{\tau}(\mu_{\tau^*+1},\frac{1}{\tau^*+1})+(1-\beta_{\tau})(\mu_{\tau},\frac{1}{\tau})=(\mu^{\tau}_{\tau^*},\frac{1}{\tau^*})& \quad \tau\leq\tau^*-1,\\
%\beta_{\tau}(\mu_{\tau^*},\frac{1}{\tau^*})+(1-\beta_{\tau})(\mu_{\tau},\frac{1}{\tau})=(\mu^{\tau}_{\tau^*+1},\frac{1}{\tau^*+1}) & \quad \tau\geq\tau^*+2,
%\end{array} \right.
%\end{equation*}
\begin{equation*}
\begin{aligned}
&\textstyle\beta_{\tau}(\mu_{\tau^*+1},\frac{1}{\tau^*+1})+(1-\beta_{\tau})(\mu_{\tau},\frac{1}{\tau})=(\mu^{\tau}_{\tau^*},\frac{1}{\tau^*}),~\tau\leq\tau^*-1,\\
&\textstyle\beta_{\tau}(\mu_{\tau^*},\frac{1}{\tau^*})+(1-\beta_{\tau})(\mu_{\tau},\frac{1}{\tau})=(\mu^{\tau}_{\tau^*+1},\frac{1}{\tau^*+1}),~\tau\geq\tau^*+2,
\end{aligned}
\end{equation*}
for some $\beta_{\tau}\in[0,1]$. Clearly, set
$
\{(\mu^1_{\tau^*},\frac{1}{\tau^*}),\cdots, (\mu^{\tau^*-1}_{\tau^*},\frac{1}{\tau^*}),$
$(\mu_{\tau^*},\frac{1}{\tau^*}),(\mu_{\tau^*+1},\frac{1}{\tau^*+1}), (\mu^{\tau^*+2}_{\tau^*+1},\frac{1}{\tau^*+1}), \cdots, (\mu^{\tau_{max}}_{\tau^*+1},\frac{1}{\tau^*+1})\}
$
can give the same operating point as $\mathcal{S}$ does. Therefore, using the technique presented in \eqref{eq:goodtech} and twice use of Lemma \ref{3userlem2} we have
\begin{equation*}
\begin{aligned}
\displaystyle \frac{1}{n}\log{M}&\leq\alpha\tilde{I}_{r_p}(\gamma_1,\frac{1}{\tau^*})+(1-\alpha)\tilde{I}_{r_p}(\gamma_2,\frac{1}{\tau^*+1})+\epsilon_n\\
\leq&\max_{\alpha , \gamma_1 , \gamma_2, \tau} \alpha\tilde{I}_{r_p}(\gamma_1,\frac{1}{\tau})+(1-\alpha)\tilde{I}_{r_p}(\gamma_2,\frac{1}{\tau+1})+\epsilon_n.
\end{aligned}
\end{equation*}
Letting $n\rightarrow\infty$, $\epsilon_n$ goes to zero and we get the desired result.
\subsection{Achievability}
For a given value of $r_p$, we first use the expression obtained in the converse to obtain optimal choice for parameters $\alpha$, $\gamma_1$, $\gamma_2$ and $\tau$. Because of Lemma \ref{3userlem2}, in order to find the optimal $\tau$, we can start with $\tau=1$ and optimize other parameters and then calculate $\alpha\tilde{I}_{r_p}(\gamma_1,\frac{1}{\tau})+(1-\alpha)\tilde{I}_{r_p}(\gamma_2,\frac{1}{\tau+1})$ and in each step, increase the value of $\tau$ by 1, stopping whenever the obtained value is decreased compared to the previous step. For instance, for $r_p\le0.1$, the optimal $\tau$ is 1, and hence the procedure stops after checking two steps. After calculating optimal parameters $\alpha^*$, $\gamma_1^*$, $\gamma_2^*$ and $\tau^*$, the optimal input distributions $P_1^*$ and $P_2^*$ could be obtained using optimization problem \eqref{eq:opti}.

Achieving the proposed upper bound could be done by a method exactly similar to the one used in Subsection \ref{ssec:ach}. The sequence of steps in our achievability scheme is as follows.

\noindent
\textbf{Generating the codebook:} 
The codebook $\mathcal{C}$ is generated by combining two codebooks $\mathcal{C}_1$ and $\mathcal{C}_2$, which are designed for the cases that the inter-arrival times in the probe stream are $\tau^*$ and $\tau^*+1$, respectively. 
\begin{itemize}
\item Set $\alpha =\alpha^*-\delta$, for a small and positive value of $\delta$.
\item To generate $\mathcal{C}_1$, consider the $(\tau^*+1)$-ary distribution $P^*_1$ over set of symbols $\{a_0, ..., a_{\tau^*}\}$. Generate a $(\tau^*+1)$-ary codebook $\mathcal{C}_1$ containing $2^{\alpha nR_1}$ sequences of length $\frac{1}{\tau^*}\alpha n$ of i.i.d. entries according to $P^*_1$. Substitute $a_i$ with a binary sequence of $i$ $1$'s followed by $\tau^*-i$ $0$'s. Therefore, we will have $2^{\alpha nR_1}$ binary sequences of length $\alpha n$.
\item To generate $\mathcal{C}_2$, consider the $(\tau^*+2)$-ary distribution $P^*_2$ over set of symbols $\{a_0, ..., a_{\tau^*+1}\}$. Generate a $(\tau^*+2)$-ary codebook $\mathcal{C}_2$ containing $2^{(1-\alpha) nR_2}$ sequences of length $\frac{1}{\tau^*+1}(1-\alpha) n$ of i.i.d. entries according to $P^*_2$. Substitute $a_i$ with a binary sequence of $i$ $1$'s followed by $\tau^*+1-i$ $0$'s. Therefore, we will have $2^{(1-\alpha)nR_2}$ binary sequences of length $(1-\alpha)n$.
\item Similar to the approach in Subsection \ref{ssec:ach}, combine $\mathcal{C}_1$ and $\mathcal{C}_2$ to get $\mathcal{C}$, such that $\mathcal{C}$ has $2^{n(\alpha R_1+(1-\alpha)R_2)}$ binary sequences of length $n$. Rows of $\mathcal{C}$ are our codewords.
\end{itemize}

The encoding and decoding parts are exactly similar to the approach in Subsection \ref{ssec:ach}, and we avoid repeating them, except that here $U_d^T$ sends  
a sequence with inter-arrival times of $\tau^*$ in the first $\alpha n$ time slots and $\tau^*+1$ for the rest of $(1-\alpha)n$ time slots.
Note that unlike the case with two users, here due to the presence of user $U_p$, we have noise in the channel. A standard random coding approach \cite[Chapter 7]{cover2012elements} shows that in infinite block-length regime, where $n\rightarrow\infty$, we can choose $R_1=\frac{1}{\tau^*}I_{r_p}(X;Y)$, with $X$ having distribution $P^*_1$, with channel described in \eqref{eq:yisbin}, i.e., $R_1=\tilde{I}_{r_p}(\gamma_1^*,\frac{1}{\tau^*})$, and $R_2=\frac{1}{\tau^*+1}I_{r_p}(X;Y)$, with $X$ having distribution $P^*_2$, with channel described in \eqref{eq:yisbin}, i.e., $R_2=\tilde{I}_{r_p}(\gamma_2^*,\frac{1}{\tau^*+1})$. 
Therefore, we have
\[
C\ge  \alpha\tilde{I}_{r_p}(\gamma^*_1,\frac{1}{\tau^*})+(1-\alpha)\tilde{I}_{r_p}(\gamma^*_2,\frac{1}{\tau^*+1}).
\]
Letting $\delta$ go to zero, completes the achievability proof.

\section{Conclusion}
\label{conclusion}

We studied convert queueing channels (CQCs) that can occur through delays experienced by users who are sharing a scheduler.
As the scheduling policy plays a crucial role in the possible information transmission rate in this type of channel, we focused on deterministic and work-conserving FCFS scheduling policy.
An information-theoretic framework was proposed to derive the capacity of the CQC under this scheduling policy.
First, we studied a system comprising a transmitter and a receiver that share a deterministic and work-conserving FCFS scheduler.
We obtained the maximum information transmission rate in this CQC and showed that an information leakage rate as high as 0.8114 bits per time slot is possible.
We also considered the effect of the presence of other users on the information transmission rate of this channel.
We extended the model to include a third user who also uses the shared resource and studied the effect of the presence of this user on the information transmission rate.
The solution approach presented in this extension can be applied to calculate the capacity of the covert queueing channel among any number of users.
The achievable information transmission rates obtained from this study demonstrate the possibility of significant information leakage and great privacy threats brought by CQCs in FCFS schedulers.
Based on this result, special attention must be paid to CQCs in high security systems.
Finding the capacity of CQCs under other scheduling policies, especially non-deterministic policies, remains to be done in the research area of covert communications and is considered as the main direction for future work.
Furthermore, a comprehensive study is required to design suitable scheduling policies that can simultaneously guarantee adequate levels of both security and throughput.

%\pagebreak
%\newpage

\begin{appendices}

%\appendix
%\label{app}
\section{Proof of Stability}
\label{app:A}

Consider the system model with $M$ users and service rate $\rho$, and let $\mu=\sum_{i=1}^Mr_i$.
%For the system model with $M$ users and service rate $\rho$, the arrival process has a Poisson binomial distribution with probability mass function
%\begin{equation*}
%P(K=k)=\sum_{A\in F_k}\prod_{i\in A}r_i\prod_{j\in A^c}(1-r_j),
%\end{equation*}
%with support $k\in\{0,1,...,M\}$, where $F_k$ is the set of all subsets of $k$ integers that can be selected from $\{1,2,3,...,M\}$. The mean of this distribution is $\mu=\sum_{i=1}^Mr_i$.
We denote arrival, service and queue length at time $k$, with $a(k)$, $s(k)$ and $q(k)$, respectively, and we have
\begin{equation*}
q(k+1)=(q(k)+a(k)-s(k))^+.
\end{equation*}
Using Foster-Lyapunov theorem with Lyapunov function $V(q(k))=(q(k))^2$ and calculating the drift, we have
\begin{equation*}
\begin{aligned}
\mathbb{E}[q^2(k+1)-q^2(k)|q(k)=q]&\le\mathbb{E}[(q+a-s)^2-q^2]\\
&=\mathbb{E}[2q(a-s)]+\mathbb{E}[(a-s)^2],
\end{aligned}
\end{equation*}
where $\mathbb{E}[(a-s)^2]$ is a constant and we denote it by $K$. Therefore, for some $\epsilon>0$, if $\mu<\rho$, for large enough value of $q$, we have
\begin{equation*}
\mathbb{E}[q^2(k+1)-q^2(k)|q(k)=q]\le2q(\mu-\rho)+K\le -\epsilon,
\end{equation*}
which implies the stability.

\section{Proof of Lemma \ref{2userlem0}}
\label{app:B}
%\textit{Proof of lemma \ref{2userlem0}.}
In order to find the optimum distribution, $P_X$, the optimization problem could be written as follows.
\begin{equation}
\begin{aligned}
&\displaystyle\max_{P_X\geq 0}\log_2e\sum_{i=0}^{k}P_{X}(i)\ln(\frac{1}{P_{X}(i)})\\
&\hspace{15mm}s.t.\hspace{5mm} \left\{
\begin{array}{l l}
\sum_{i=0}^{k}iP_{X}(i)=\mathbb{E}[X]=k\gamma,\\
\sum_{i=0}^{k}P_{X}(i)=1,
\end{array} \right.
\end{aligned}
\label{eq:maxi}
\end{equation}
which could be solved using the Lagrange multipliers method. The Lagrangian function would be as follows.
\begin{equation*}
\begin{aligned}
\sum_{i=0}^{k}&P_{X}(i)\ln(\frac{1}{P_{X}(i)})+\lambda(\sum_{i=0}^{k}iP_{X}(i)-k\gamma)\\
&+\rho(\sum_{i=0}^{k}P_{X}(i)-1).
\end{aligned}
\end{equation*}
Setting the derivative with respect to $P_{X}(i)$ equal to zero, we get $\ln(\frac{1}{P_{X}(i)})-1+i\lambda+\rho=0$, which implies that
\begin{equation}
\label{eq:appII1}
\begin{aligned}
&P_{X}(i)=e^{\rho-1}\cdot e^{i\lambda}.
\end{aligned}
\end{equation}
Also, from the second constraint we have
\begin{equation}
\label{eq:appII2}
\begin{aligned}
& \sum_{i=0}^{k}e^{\rho-1}\cdot e^{i\lambda}=1\Rightarrow e^{\rho-1}=\frac{1}{\sum_{i=0}^{k}e^{i\lambda}}.
\end{aligned}
\end{equation}
Combining \eqref{eq:appII1} and \eqref{eq:appII2}, we have
\begin{equation*}
\begin{aligned}
P_{X}(i)=\frac{e^{i\lambda}}{\sum_{i=0}^{k}e^{i\lambda}},
\end{aligned}
\label{eq:dist}
\end{equation*}
which is the tilted distribution of $U_{k}$ with parameter $\lambda$.\\
In order to calculate $\lambda$, from the first constraint,
\begin{equation*}
\begin{aligned}
k\gamma&=\sum_{i=0}^{k}iP_{X}(i)
=\sum_{i=0}^{k}i\frac{e^{i\lambda}}{\sum_{i=0}^{k}e^{i\lambda}}
=\frac{\displaystyle\sum_{i=0}^{k}ie^{i\lambda}}{\displaystyle\sum_{i=0}^{k}e^{i\lambda}}\\
&=\frac{\mathbb{E}[U_{k}e^{U_{k}\lambda}]}{\mathbb{E}[e^{U_{k}\lambda}]}
=\frac{d}{d\lambda}(\ln{\mathbb{E}[e^{U_{k}\lambda}]})
=\psi'_{U_{k}}(\lambda).
\end{aligned}
\end{equation*}

\section{Proof of lemma \ref{2userlem2}}
\label{app:C}
%\textit{Proof of lemma \ref{2userlem2}.}
First we note that
\begin{equation*}
\begin{aligned}
\tilde{H}&(\gamma,\frac{1}{k}) = \frac{1}{k}[\log_2(k+1)-\psi^*_{U_k}(k\gamma)\log_2 e]\\
&=[\frac{1}{k}\log_2(k+1)-\frac{1}{k}\underset{\lambda}{sup}\{k\gamma\lambda-\log(\frac{\sum_{i=0}^ke^{i\lambda}}{k+1})\}\log_2 e]\\
&=-\underset{\lambda}{sup}\{\gamma\lambda\log_2 e-\frac{1}{k}\log_2(\sum_{i=0}^ke^{i\lambda})\}.
\end{aligned}
\end{equation*}
Therefore, if we can show that for any given $\lambda$ the function $ h(\gamma,\frac{1}{k})=\gamma\lambda\log_2 e-\frac{1}{k}\log_2(\sum_{i=0}^ke^{i\lambda})$ is convex, then since the supremum of convex functions is convex, we can conclude the desired concavity of the function $\tilde{H}(\cdot,\cdot)$.\\
Noting that $ \frac{1}{k}\log_2(\sum_{i=0}^ke^{i\lambda})=\frac{1}{k}\log_2(\frac{1-e^{(k+1)\lambda}}{1-e^{\lambda}})$, to prove the convexity of $h(\cdot,\cdot)$, it suffices to prove that the function $ g(x)=x\log(\frac{1-e^{(\frac{1}{x}+1)\lambda}}{1-e^{\lambda}})$, $0<x\le1$, is concave. This is true from the concavity of the function $ \hat{g}(x)=\log(\frac{1-e^{(x+1)\lambda}}{1-e^{\lambda}})$, and the fact that for any function $f$, $xf(\frac{1}{x})$ is concave if $f(x)$ is concave. The concavity of the function $\hat{g}(\cdot)$ can be easily seen by taking its second derivative.

\section{Proof of Lemma \ref{2userlem3}}
\label{app:D}
%\textit{Proof of lemma \ref{2userlem3}.}
For a given $\gamma$ and support set $\{0,1,...,k\}$, suppose the distribution $P_X^*$ is defined over $\{0,1,...,k\}$ and has mean $\mathbb{E}_{P^*}[X]=k\gamma$ and
\begin{equation*}
\max_{\substack{X\in\{0,1,...,k\}\\ \mathbb{E}[X]=k\gamma}} \frac{1}{k}H(X)=\frac{1}{k}H(P_X^*).
\end{equation*}
Define distribution $Q_X$ as 
%\begin{equation*}
$Q_X(i)=P_X^*(k-i)$, $0\le i\le k.$
%\end{equation*}
Therefore the entropy of $Q_X$ will be the same as the entropy of $P_X^*$ and we have
\begin{equation*}
\begin{aligned}
\mathbb{E}_Q[X]&=\sum_{i=0}^kiQ_X(i)=\sum_{i=0}^kiP^*_X(k-i)\\
&=-\sum_{i=0}^k(-k+(k-i))P^*_X(k-i)\\
&=k-\sum_{i=0}^k(k-i)P^*_X(k-i)=k-k\gamma=k(1-\gamma).
\end{aligned}
\end{equation*}
Hence, we have
\begin{equation}
\begin{aligned}
\label{menav}
\tilde{H}(\gamma,\frac{1}{k})&=\max_{\substack{X\in\{0,1,...,k\}\\ \mathbb{E}[X]=k\gamma}} \frac{1}{k}H(X)=\frac{1}{k}H(P_X^*)=\frac{1}{k}H(Q_X)\\
&\le\max_{\substack{X\in\{0,1,...,k\}\\ \mathbb{E}[X]=k(1-\gamma)}} \frac{1}{k}H(X)=\tilde{H}(1-\gamma,\frac{1}{k}).
\end{aligned}
\end{equation}

Similarly, suppose for the distribution $Q_X^*$, defined over $\{0,1,...,k\}$ and with mean $\mathbb{E}_{Q^*}[X]=k(1-\gamma)$,
\begin{equation*}
\max_{\substack{X\in\{0,1,...,k\}\\ \mathbb{E}[X]=k(1-\gamma)}} \frac{1}{k}H(X)=\frac{1}{k}H(Q_X^*).
\end{equation*}
Define distribution $P_X$ as
%\begin{equation*}
$P_X(i)=Q_X^*(k-i)$, $0\le i\le k.$
%\end{equation*}
Therefore the entropy of $P_X$ will be the same as the entropy of $Q_X^*$ and we have
\begin{equation*}
\begin{aligned}
\mathbb{E}_P[X]&=\sum_{i=0}^kiP_X(i)=\sum_{i=0}^kiQ^*_X(k-i)\\
&=-\sum_{i=0}^k(-k+(k-i))Q^*_X(k-i)\\
&=k-\sum_{i=0}^k(k-i)Q^*_X(k-i)=k-k(1-\gamma)=k\gamma.
\end{aligned}
\end{equation*}
Hence, we have
\begin{equation}
\begin{aligned}
\label{medov}
\tilde{H}(1-\gamma,\frac{1}{k})&=\max_{\substack{X\in\{0,1,...,k\}\\ \mathbb{E}[X]=k(1-\gamma)}} \frac{1}{k}H(X)=\frac{1}{k}H(Q_X^*)=\frac{1}{k}H(P_X)\\
&\le\max_{\substack{X\in\{0,1,...,k\}\\ \mathbb{E}[X]=k\gamma}} \frac{1}{k}H(X)=\tilde{H}(\gamma,\frac{1}{k}).
\end{aligned}
\end{equation}
Comparing (\ref{menav}) and (\ref{medov}) gives the desired result.

\section{Proof of Lemma \ref{3userlem1}}
\label{app:E}
%\textit{Proof of lemma \ref{3userlem1}.}
\begin{equation*}
\begin{aligned}
\tilde{I}_{r_p}&(\gamma,\frac{1}{k})=\max_{\substack{X\in\{0,1,...,k\}\\ \mathbb{E}[X]=k\gamma}}\frac{1}{k}I_{r_p}(X;Y)\\
&=\max_{\substack{X\in\{0,1,...,k\}\\ \mathbb{E}[X]=k\gamma}} \frac{1}{k}[H_{r_p}(Y)-H_{r_p}(Y|X)]\\
&=\max_{\substack{X\in\{0,1,...,k\}\\ \mathbb{E}[X]=k\gamma}} \frac{1}{k}[H_{r_p}(Y)-\sum_{x=0}^kP_X(x)H_{r_p}(Y|X=x)]\\
&\overset{(a)}{=}\max_{\substack{X\in\{0,1,...,k\}\\ \mathbb{E}[X]=k\gamma}}\frac{1}{k}[H_{r_p}(Y)-\sum_{x=0}^kP_X(x)H(Bin(k,r_p))]\\
&=\max_{\substack{X\in\{0,1,...,k\}\\ \mathbb{E}[X]=k\gamma}} \frac{1}{k}[H_{r_p}(Y)-H(Bin(k,r_p))]\\
&=\max_{\substack{X\in\{0,1,...,k\}\\ \mathbb{E}[X]=k\gamma}} \frac{1}{k}H_{r_p}(Y)-\frac{1}{k}H(Bin(k,r_p))\\
&=\frac{1}{k}\check{H}_{r_p}(\gamma,\frac{1}{k})-\frac{1}{k}H(Bin(k,r_p)),
\end{aligned}
\end{equation*}
where $(a)$ follows from (\ref{eq:yisbin}).

\section{Proof of Lemma \ref{3userlem2}}
\label{app:F}
%\textit{Proof of lemma \ref{3userlem2}.}
We first prove that the function $\tilde{I}(\cdot,\cdot)$ is concave in its first argument.
Let $P^*_{X_1}$ and $P^*_{X_3}$ be the optimum distributions resulted from optimization problem $\eqref{eq:opti}$ for parameters $(\gamma_1, \frac{1}{k})$ and $(\gamma_3, \frac{1}{k})$, respectively. Therefore for any $0\le\alpha\le1$,
\begin{equation*}
\begin{aligned}
&\alpha\tilde{I}_{r_p}(\gamma_1,\frac{1}{k})+(1-\alpha)\tilde{I}_{r_p}(\gamma_3,\frac{1}{k})\\
&\overset{(a)}{=}\frac{1}{k}\alpha H(P^*_{X_1}\ast Bin(k,r_p))+\frac{1}{k}(1-\alpha)H(P^*_{X_3}\ast Bin(k,r_p))\\
&~~~-\frac{1}{k}H(Bin(k,r_p))\\
&\overset{(b)}{\le}\frac{1}{k}H(\alpha(P^*_{X_1}\ast Bin(k,r_p))+(1-\alpha)(P^*_{X_3}\ast Bin(k,r_p)))\\
&~~~-\frac{1}{k}H(Bin(k,r_p))\\
&\le\frac{1}{k} \max_{\substack{X\in\{0,1,...,k\}\\ \mathbb{E}[X]=k(\alpha\gamma_1+(1-\alpha)\gamma_3)}}H(P_{X}\ast Bin(k,r_p))\\
&~~~-\frac{1}{k}H(Bin(k,r_p))\\
&=\tilde{I}_{r_p}(\alpha\gamma_1+(1-\alpha)\gamma_3,\frac{1}{k}),
\end{aligned}
\end{equation*}
where $(a)$ follows from Lemma \ref{3userlem1} and $(b)$ follows from the concavity of the entropy function.

\noindent
Because of the complexity and lack of symmetry or structure in the function $\tilde{I}$, there is no straightforward analytic method for proving its concavity. But we notice that it is suffices to show that for all $2\le k\le\tau_{max}-1$, and $\alpha$ such that
\begin{equation}
\label{eq:concproofconst}
\alpha\frac{1}{k-1}+(1-\alpha)\frac{1}{k+1}=\frac{1}{k},
\end{equation}
we have
\begin{equation}
\label{eq:concproof}
\begin{aligned}
\alpha\tilde{I}_{r_p}(\gamma_1,\frac{1}{k-1})+(1-&\alpha)\tilde{I}_{r_p}(\gamma_3,\frac{1}{k+1})\\
&\le\tilde{I}_{r_p}(\alpha\gamma_1+(1-\alpha)\gamma_3,\frac{1}{k}).
\end{aligned}
\end{equation}
From \eqref{eq:concproofconst} we have $\displaystyle\alpha=\frac{k-1}{2k}$, hence using Lemma \ref{3userlem1}, \eqref{eq:concproof} reduces to
\begin{equation}
\label{eq:concproofl}
2\check{H}_{r_p}(\gamma_2,\frac{1}{k})-\check{H}_{r_p}(\gamma_1,\frac{1}{k-1})-\check{H}_{r_p}(\gamma_3,\frac{1}{k+1})+f(k,r_p)\ge0,
\end{equation}
where $\displaystyle f(k,r_p)=H(Bin(k-1,r_p))+H(Bin(k+1,r_p))-2H(Bin(k,r_p))$. Noting that the left-hand side is a Lipschitz continuous function of $\gamma_1$, $\gamma_3$ (away from zero), and $r_p$ and the fact that $k$ takes finitely many values, the validation of inequality \eqref{eq:concproofl} can be done numerically.

\end{appendices}

\section*{Acknowledgment}
The authors would like to thank Mr. Rashid Tahir for helpful discussions regarding instances of existing covert queueing channels in real-world systems.\\
This work was in part supported by MURI grant ARMY W911NF-15-1-0479, Navy N00014-16-1-2804 and NSF CNS 17-18952.
\bibliographystyle{ieeetr}
\bibliography{thesisrefs}

\vspace{-2mm}

\begin{IEEEbiographynophoto}{AmirEmad Ghassami}
received the B.Sc. degree in electrical engineering from the Isfahan University of Technology in 2013 and the M.Sc. degree from the University of Illinois at Urbana–Champaign in 2015. He is currently a Graduate Student with the Department of Electrical and Computer Engineering, University of Illinois at Urbana–Champaign. His research interests include machine learning, causal structure learning, information theory, and statistical inference.
\end{IEEEbiographynophoto}

\vspace{-2mm}

\begin{IEEEbiographynophoto}{Negar Kiyavash} 
is a Willett Faculty Scholar and an Associate Professor, jointly, in the Departments of Industrial and Enterprise Systems Engineering and Electrical and Computer Engineering at the University of Illinois at Urbana-Champaign. She received the B.S. degree in Electrical and Computer Engineering from the Sharif University of Technology, Tehran, in 1999, and the M.S. and Ph.D. degrees, also in Electrical and Computer Engineering, both from the University of Illinois at Urbana-Champaign in 2003 and 2006, respectively. Her research interests include statistical signal processing, information theory, coding theory, operations research, and their application to learning and inference in complex networks. She is a recipient of the University of Illinois College of Engineerings Grainger Award in Emerging Technologies, AFOSR Young Investigator Award, NSF Career Award, and Deans Award for Excellence in Research at the University of Illinois at Urbana-Champaign.
\end{IEEEbiographynophoto}
\vfill

\end{document}